\long\def\addtocontents#1#2{%
  \protected@write\@auxout
    {\let\label\@gobble \let\index\@gobble \let\glossary\@gobble}%
    {\string\@writefile{#1}{#2}}}
\newcommand\tableofcontents{%
    \section*{
        \@mkboth{\noindent%
           \MakeUppercase\contentsname}{\MakeUppercase\contentsname}}%
    \@starttoc{toc}%
    }
\newcommand*\l@section[2]{%
  \ifnum \c@tocdepth >\z@
    \addpenalty\@secpenalty
    \addvspace{1.0em \@plus\p@}%
    \setlength\@tempdima{1.5em}%
    \begingroup
      \parindent \z@ \rightskip \@pnumwidth
      \parfillskip -\@pnumwidth
      \leavevmode \bfseries
      \advance\leftskip\@tempdima
      \hskip -\leftskip
      #1\nobreak\hfil \nobreak\hb@xt@\@pnumwidth{\hss #2}\par
    \endgroup
  \fi}
\newcommand*\l@subsection{\@dottedtocline{2}{1.5em}{2.3em}}
\newcommand*\l@subsubsection{\@dottedtocline{3}{3.8em}{3.2em}}
\newcommand*\l@paragraph{\@dottedtocline{4}{7.0em}{4.1em}}
\newcommand*\l@subparagraph{\@dottedtocline{5}{10em}{5em}}
\newcommand{\llangle}{\left\langle\kern-2.1\nulldelimiterspace\left\langle}
\newcommand{\rrangle}{\right\rangle\kern-2.1\nulldelimiterspace\right\rangle}
\newcommand{\MATLAB}{\textsc{Matlab}}
\renewcommand{\vec}[1]{{\mathchoice
                     {\mbox{\boldmath$\displaystyle{#1}$}}
                     {\mbox{\boldmath$\textstyle{#1}$}}
                     {\mbox{\boldmath$\scriptstyle{#1}$}}
                     {\mbox{\boldmath$\scriptscriptstyle{#1}$}}}}
\newcommand{\mat}[1]{\mathbf{{#1}}}
\newcommand{\trace}{\mathsf{tr}}
\newcommand{\ave}[1]{\mathsf{E}\left\{ {#1} \right\}}
\newcommand{\var}[1]{\mathsf{Var}\left\{ {#1} \right\}}
\newcommand{\cov}[2]{\mathsf{Cov}\left\{ {#1},{#2} \right\}}
\newcommand{\norm}[1]{\left\| {#1} \right\|}
\renewcommand{\span}{\mathsf{span}}
\renewcommand{\diag}{\mathsf{diag}}
\newcommand{\eps}{\varepsilon}
\newcommand{\Nt}{{N_t}}
\newcommand{\Ns}{{N_s}}
\newcommand{\Nm}{{n}}
\newcommand{\Ntau}{{N_{\tau}}}
\newcommand{\Ntr}{{N_{{\mathsf{tr}}}}}
\newcommand{\uu}{\vec{\bar{u}}}
\newcommand{\dd}{\vec{\bar{d}}}
\newcommand{\BB}{\matrix{B}}
\newcommand{\WW}{\matrix{{W}}}
\renewcommand{\Xi}{X^{-1}}
\newcommand{\ip}[2]{\left\langle{#1}, {#2}\right\rangle}
\newcommand{\mip}[2]{\left\langle{#1}, {#2}\right\rangle_{\!\scriptscriptstyle{\text{M}}}}
\newcommand{\vv}{\ensuremath{\vec{v}}}
\renewcommand{\matrix}[1]{\mathbf{#1}}
\renewcommand{\H}{\ensuremath{\matrix{H}}}
\newcommand{\M}{{\ensuremath{\matrix{M}}}}
\newcommand{\I}{{\ensuremath{\matrix{I}}}}
\newcommand{\A}{{\ensuremath{\matrix{A}}}}
\newcommand{\V}{{\ensuremath{\matrix{V}}}}
\renewcommand{\L}{{\ensuremath{\matrix{L}}}}
\newcommand{\F}{{\ensuremath{\matrix{F}}}}
\newcommand{\FT}{\matrix{\tilde{F}}}
\newcommand{\HT}{\matrix{\tilde{H}}_{\text{misfit}}}
\newcommand{\HTr}{\matrix{\tilde{H}}_{\text{misfit},r}}
\newcommand{\HM}{\matrix{H}_{\text{misfit}}}
\newcommand{\obs}{\vec{d}}  
\newcommand{\ipar}{m}
\newcommand{\dpar}{\vec{m}}
\newcommand{\iparpost}{\ipar_\text{post}}
\newcommand{\dparpost}{\dpar_\text{post}}
\newcommand{\xx}{\ensuremath{\boldsymbol x}}
\newcommand{\eqnlab}[1]{\label{eq:#1}}
\definecolor{pacificorange}{cmyk}{.15,.45,1,0} 
\definecolor{pacificgray}{cmyk}{0,.15,.35,.60}
\definecolor{pacificlgray}{cmyk}{0,0,.2,.4}
\definecolor{pacificcream}{cmyk}{.05,.05,.15,0}
\definecolor{deepyellow}{cmyk}{0,.17,.80,0}
\definecolor{lightblue}{cmyk}{.49,.01,0,0}
\definecolor{lightbrown}{cmyk}{.09,.15,.34,0}
\definecolor{deepviolet}{cmyk}{.79,1,0,.15}
\definecolor{deeporange}{cmyk}{0,.59,1,.18}
\definecolor{dustyred}{cmyk}{0,.7,.45,.4}
\definecolor{grassgreen}{RGB}{92,135,39}
\definecolor{pacificblue}{RGB}{59,110,143}
\definecolor{pacificgreen}{cmyk}{.15,0,.45,.30}
\definecolor{deepblue}{cmyk}{1,.57,0,.2}
\definecolor{turquoise}{cmyk}{.43,0,.24,0}
\definecolor{green}{rgb}{0,0.65,0}
\newcommand{\bs}[1]{\boldsymbol{#1}}
\newcommand{\equaldef}{{:=}}
\newcommand{\mc}[1]{\mathcal{#1}}
\newcommand{\nor}[1]{\left\| #1 \right\|}
\newcommand{\R}{\mathbb{R}}
\newcommand{\half} {\ensuremath{\frac{1}{2}}}
\newcommand{\D}{\mathcal{D}}
\newcommand{\Acal}{\mc{A}}
\newcommand{\borel}{\mathfrak{B}}
\newcommand{\hilb}{\mathscr{H}}
\newcommand{\C}{\mc{C}}
\newcommand{\Grad} {\ensuremath{\nabla}}
\newcommand{\LRp}[1]{\left( #1 \right)}
\newcommand{\LRc}[1]{\left\{ #1 \right\}}
\newcommand{\GM}[2]{\mc{N}\left( #1, #2 \right)}
\newcommand{\del}{\partial}
\newcommand{\delj}[1]{\frac{\partial{#1}}{\partial {w_j}}}
\newcommand{\obsop}{\mathcal{B}}
\newcommand{\prior}  { \pi_{\rm{\scriptscriptstyle{prior}}}   }
\newcommand{\like}{ \pi_{\rm{\scriptscriptstyle{like}}} }
\newcommand{\postm}{\mu_{\rm{post}}}
\newcommand{\priorm}{\mu_0}
\newcommand{\ncov} {\bs \Gamma_{\rm{noise}} }
\newcommand{\prcov} {\bs \Gamma_{\rm{prior}}    }
\newcommand{\postcov} {\bs \Gamma_{\rm{post}}    }
\newcommand{\Vr} {\matrix{V}_r}
\newcommand{\Dr} {\matrix{D}_r}
\newcommand{\ML}{\M_l}
\newcommand{\Mt}{\tilde{\M}}
\newcommand{\iFF}{\mathcal{F}}
\newcommand{\ff}{\mathcal{F}}     
\newcommand{\FF}{\matrix{F}}
\newcommand{\sol}{\matrix{S}}
\newcommand{\adjMacroMM}[1]{{#1}^*}
\newcommand{\adjMacroME}[1]{{#1}^*}
\newcommand{\Aadj}{\adjMacroMM{\A}}
\newcommand{\iFFadj}{\adjMacroME{\iFF}}
\newcommand{\Fadj}{\adjMacroME{\F}}
\newcommand{\wopt}{\vec{w}_{\text{opt}}}
\def\addressices{Institute for Computational Engineering \& Sciences, The
  University of Texas at Austin, Austin, TX, USA}
\def\addressgeo{Department of Geological Sciences, The University of
  Texas at Austin, Austin, TX, USA}
\def\addressmech{Department of Mechanical Engineering, The
  University of Texas at Austin, Austin, TX, USA}
\begin{document}

\author{Alen Alexanderian\footnotemark[2] \and Noemi Petra\footnotemark[2]
  \and Georg Stadler\footnotemark[2] \and Omar~Ghattas{\footnotemark[2]~\footnotemark[3]~\footnotemark[4]}}
\renewcommand{\thefootnote}{\fnsymbol{footnote}}
\footnotetext[2]{\addressices}
\footnotetext[3]{\addressmech}
\footnotetext[4]{\addressgeo}
\renewcommand{\thefootnote}{\arabic{footnote}}
\title{A-Optimal design of experiments for infinite-dimensional
  Bayesian linear inverse problems with regularized
  $\ell_0$-sparsification\thanks{This work was partially supported by
    NSF grant ARC-0941678; DOE grants 
DE-FC02-13ER26128, %
DE-SC0010518 %
DE-FC02-11ER26052, %
DE-11018096, %
and 
DE-FG02-09ER25914; %
and AFOSR grant  FA9550-12-1-0484.}} 

\maketitle

\begin{abstract}
We present an efficient method for computing A-optimal experimental
designs 
for infinite-dimensional Bayesian linear inverse problems
governed by partial differential equations
(PDEs). Specifically, we 
address the 
problem of optimizing the location
of sensors (at which observational data are collected) 
to minimize the uncertainty in the parameters estimated by solving the
inverse problem, where the uncertainty is expressed by the trace of
the posterior covariance.
Computing optimal experimental designs (OEDs) is particularly
challenging for inverse problems governed by computationally expensive
PDE models with infinite-dimensional (or, after discretization,
high-dimensional) parameters.  To alleviate the computational cost, we
exploit the problem structure
and build a low-rank approximation of the parameter-to-observable map,
preconditioned with the square root of the prior covariance
operator. The availability of this low-rank surrogate, relieves our
method from expensive PDE solves when evaluating the optimal
experimental design objective function, i.e., the trace of the
posterior covariance, and its derivatives. Moreover, we employ a
randomized trace estimator for efficient evaluation of the OED
objective function.  We control the sparsity of the sensor
configuration by employing a sequence of penalty functions that
successively approximate the $\ell_0$-``norm''; this results in binary
designs that characterize optimal sensor locations.
We present numerical results for inference of the initial condition
from spatio-temporal observations in a time-dependent
advection-diffusion problem in two and three space dimensions.  We
find that an optimal design can be computed at a cost, measured in
number of forward PDE solves, that is independent of the parameter and sensor
dimensions. Moreover, the numerical optimization problem for finding
the optimal design can be solved in a number of interior-point
quasi-Newton iterations that is insensitive to the parameter and
sensor dimensions. We demonstrate numerically that $\ell_0$-sparsified
experimental designs obtained via a continuation method outperform
$\ell_1$-sparsified designs. 
\end{abstract}

\begin{keywords}
Optimal experimental design,
A-optimal design,
Bayesian inference,
sensor placement,
ill-posed inverse problems,
low-rank approximation,
randomized trace estimator,
randomized SVD.
\end{keywords}

\begin{AMS}
62K05,  %
35Q62,  %
62F15,  %
35R30,  %
35Q93,  %
65C60.  %
\end{AMS}

\section{Introduction}\label{sec:intro}

Recent advances in theory~\cite{Stuart10} and numerical algorithms
(e.g., \cite{Bui-ThanhGhattasMartinEtAl13}) are enabling efficient
solution of infinite-dimensional Bayesian inverse
problems. This opens the door to consideration of the upstream
question: how do we place sensors to \emph{optimally} infer model
parameters for large-scale problems? Here we present an efficient method
for such optimal experimental design (OED) problems.
Specifically, we consider Bayesian linear inverse problems governed by
PDEs whose solution is the posterior probability law for a parameter
field.  The numerical solution of such inverse problems is challenging
due to the infinite (or, when discretized, large) dimension of the
parameters, ill-posedness of the inverse problem, and
expensive-to-compute PDE models.
The Bayesian inverse problem is by itself very challenging, but it is
merely a subproblem within the OED problem, and must be solved
repeatedly when using conventional OED methods. 
Hence, it is essential to make maximum use of the problem structure to
realize efficient algorithms that are \emph{scalable}, i.e., their
performance---measured in number of governing (forward) PDE solves---is
independent of the discretized parameter and sensor dimensions, and
the discretization of the governing PDE.

When formulating an OED problem, a basic question is the precise
meaning of what constitutes the \emph{design}.  The
present work concerns computation of optimal sensor locations where
observational data will be collected. A subsequent question concerns the
definition of an \emph{optimal} design, which leads to the choice of
the design criterion. For Bayesian inverse problems, a natural choice
is to seek a design that minimizes the average posterior variance of
the inversion parameters, leading to the Bayesian A-optimal design
criterion.

Standard references for optimal experimental design
include~\cite{Ucinski05, AtkinsonDonev92, Pukelsheim93,
  Pazman86,BockKoerkelSchloeder13}.
While most of the classical texts concern problems with small or
moderate parameter dimension and focus mainly on well-posed inverse
problems, there has been recent interest in optimal design for
large-scale ill-posed
linear~\cite{HaberHoreshTenorio08,HaberMagnantLuceroEtAl12} and
nonlinear~\cite{HoreshHaberTenorio10,HaberHoreshTenorio10} inverse
problems. The numerical methods in the present paper are closest to
those in~\cite{HaberHoreshTenorio08,HaberMagnantLuceroEtAl12}, where
the authors consider finite-dimensional linear inverse problems, and
develop a framework to control the mean square error of the
regularized Tikhonov estimates.  This leads to a design criterion that
seeks to minimize the sum of the estimation bias and the variability
of the estimator around its mean, which can be related to A-optimal
designs.  These contributions use $\ell_1$-penalties to control the
sparsity of the design and, in~\cite{HaberMagnantLuceroEtAl12}, a
low-rank singular value decomposition (SVD) of the
parameter-to-observable map is used; the present paper builds on and
extends both of these ideas.
Further recent work that employs a Bayesian formulation 
includes~\cite{HuanMarzouk13,HuanMarzouk12}, where the authors use a
decision theoretic design criterion, generalized polynomial chaos
surrogates, and stochastic optimization to tackle nonlinear inverse 
problems, albeit in low to moderate parameter dimension.

In this paper, we 
devise scalable numerical methods for computing A-optimal designs for
infinite-dimensional Bayesian linear inverse problems governed by
(time-dependent) PDEs.  As suggested
in~\cite{HaberMagnantLuceroEtAl12}, having a low-rank SVD
\emph{surrogate} of the parameter-to-observable map relieves the
OED method of repeated PDE solves. In the Bayesian
context, we can improve on this idea and further exploit problem
structure; namely, we
construct a low-rank SVD representation of the parameter-to-observable
map \emph{preconditioned} by the square root of the prior covariance
operator~\cite{Bui-ThanhGhattasMartinEtAl13,FlathWilcoxAkcelikEtAl11}. This
preconditioning amounts to filtering through the prior the information
gained from the data about the model parameters. In the case of
smoothing priors usually used in infinite-dimensional problems, this
preconditioning results in faster spectral decay and thus allows for a
more efficient low-rank approximation. The remaining steps in the
solution of the OED problem use this low-rank surrogate and thus do
not require additional PDE solves.  As a result of a consistent
discretization of the problem, i.e., one that respects the
infinite-dimensional Hilbert-space structure, the numerical rank of
the prior-preconditioned parameter-to-observable map is bounded with
respect to the discretized parameter dimension.

We consider a finite number of candidate locations
for the placement of sensors; the optimal configuration is a sparse
subset of these locations.  To each candidate sensor location we
assign a non-negative number that weights the observation from that
sensor. Finding an optimal design then amounts to choosing an optimal
weight vector; a weight of 0 indicates absence of a sensor and a
weight of 1 corresponds to a sensor being placed at that location. For
computational convenience, we allow the weights to take on any value
in $[0,1]$, and use a sparsifying penalty to control the number of
nonzero weights, and thus, the number of allocated sensors.  One
option for such a penalty is the $\ell_1$-norm; see,
e.g.,~\cite{HaberHoreshTenorio08, HoreshHaberTenorio10,
  HaberMagnantLuceroEtAl12}.  This approach, however, does not %
lead to a binary (i.e., 0--1) design.  Motivated by continuation methods
used in topology optimization~\cite{Bendsoe95,BendsoeSigmund03}, we
propose to solve a sequence of OED problems with penalty functions
that successively approximate the $\ell_0$-``norm''.  This, in
contrast to an $\ell_1$-penalty approach, does result in a binary
design.  As a test problem for our OED method, we consider a forward
problem in the form of a time-dependent advection diffusion model, in
which we infer the probability law of the initial condition. For this
problem, we demonstrate the success of our continuation method, and
show that the weights found by this continuation approach improve over
optimal designs obtained via an $\ell_1$-penalty approach.

For typical infinite-dimensional Bayesian inverse problems, the
performance of our OED method is insensitive to the number of
candidate sensor locations. This is due to the fact that although the
dimension of the observations increases with the number of candidate
locations, the amount of independent information that can be gained
from nearby sensors is typically limited. Thus, increasing the number
of sensors beyond a certain point does not significantly increase the
numerical rank of the prior-preconditioned parameter-to-observable
map. As a consequence, the number of PDE solves required to compute a
low-rank SVD surrogate for the (prior-preconditioned)
parameter-to-observable map is bounded as the number of candidate
sensor locations increases.  Moreover, in our computational results we
find that the numerical optimization problem to compute the optimal
design can be solved in a number of (quasi-Newton) iterations that is
independent of the number of candidate locations.

The large-scale nature of the Bayesian inverse problems we target
necessitates the use of randomized methods in linear algebra.  In
particular, we utilize randomized trace
estimators~\cite{Hutchinson90,AvronToledo11} to estimate the trace of
the posterior covariance operator, and randomized
SVD~\cite{HalkoMartinssonTropp11} to compute a low-rank surrogate of
the prior-preconditioned parameter-to-observable map.  Moreover, for
computing the application of matrix square roots, as needed in our
method, we employ matrix-free iterative
methods~\cite{ChenAnitescuSaad11}.

The structure of this paper is as follows. After presenting the
requisite background material in Section~\ref{sec:background}, we
formulate the optimal design problem in infinite dimensions in
Section~\ref{sec:oed}. Then, we detail the components of our OED
method in Section~\ref{sec:oed-algorithms}. Section~\ref{sec:model}
provides a description of our model problem, namely the inference of the
initial condition in a time-dependent advection-diffusion equation.
We present a comprehensive numerical study in
Section~\ref{sec:numerics}.  Finally, in Section~\ref{sec:conc}, we
draw conclusions, and discuss limitations and possible extensions of
our method.

\section{Background} \label{sec:background}
In this section, we provide the background material required for the
formulation and numerical solution of optimal experimental design problems 
in the context of infinite-dimensional Bayesian inverse
problems.
In Section~\ref{sec:bayes}, we present
the Bayesian inverse problem in an infinite-dimensional Hilbert space
setting, adopting the framework in~\cite{Stuart10}. 
In Section~\ref{sec:bayes-disc}, we 
describe a discretization that is 
consistent with the infinite-dimensional inference problem 
formulation; this presentation follows~\cite{Bui-ThanhGhattasMartinEtAl13}.
Finally, in Section~\ref{sec:rand}, we briefly comment on the randomized SVD and randomized trace
estimators, which are used in our numerical method.

\subsection{Bayesian inversion in Hilbert spaces} \label{sec:bayes}
We begin our discussion by first considering a deterministic inverse problem.
Given finite-dimensional observations $\obs \in \R^q$, we seek the model parameter $\ipar$ that solves
\begin{equation}\label{eq:optpb}
  \min_{\ipar \in \hilb} \mc{J}\LRp{\ipar} \equaldef
  \half \nor{\ff\LRp{\ipar} - \obs 
  }^2_{\bs{\Gamma}} + \mathcal{R}(m).
\end{equation}
The function $\ff: \hilb \to \mathbb{R}^q$ is the parameter-to-observable map 
and $\mathcal{R}$ denotes a regularization term.  
In the applications targeted in this paper $\hilb = L^2(\D)$, 
where $\D \subset \R^d$ is a bounded domain (with $d = 2, 3$)  and an evaluation of $\ff$ 
involves the solution of a PDE, followed by the application of an observation operator.
Note that the solution of a deterministic inverse problem 
can be thought of as a \emph{point estimate} of $\ipar$. 
To obtain a full probabilistic description of the parameter $\ipar$, 
we are led to a Bayesian formulation of the problem, 
whose solution is a posterior probability law for $m$. 

In this paper, we consider a parameter $m$ which is modeled as
a random-field (random function). 
To be precise, letting $(\Omega, \Sigma, \mathsf{P})$ be an 
appropriate probability space, $\ipar:\D \times \Omega \to \R$ is a function such that
for each $\vec{x} \in \D$, $\ipar(\vec{x}, \cdot)$, is a real-valued random variable; 
thus, we can view $\ipar$ as an indexed collection of random 
variables, $\{ m(\vec{x}) \}_{\vec{x} \in \D}$, where, following the
common practice, we suppress the dependence on $\omega$.
On the other hand, for each $\omega \in \Omega$, $\ipar(\cdot, \omega):\D \to \R$
is a real-valued function. We consider the case where $\ipar(\cdot, \omega) \in \hilb$, and
thus, we can also view $m$ as a random-variable, $m:(\Omega, \Sigma,\mathsf{P}) \to \big(\hilb, \borel(\hilb)\big)$, 
where $\borel(\hilb)$ denotes the Borel $\sigma$-algebra on $\hilb$. Recall that the law of $m$
is a probability measure $\mu$ on $\big(\hilb, \borel(\hilb)\big)$ given by 
$\mu(E) = \mathsf{P}( m \in E)$ for $E \in \borel(\hilb)$.

The infinite-dimensional Bayesian inverse problem can then be
formulated as using observations to update our knowledge of the law of $\ipar$, as a probability 
measure on $(\hilb, \borel(\hilb))$. Since, 
in contrast to the finite-dimensional case, 
there is no Lebesgue measure on $\hilb$,
the infinite-dimensional Bayes formula is given by
\begin{equation} \label{equ:bayes-abstract}
   \frac{d\postm}{d\priorm} \propto \like(\obs | \ipar).
\end{equation}
Here, $\frac{d\postm}{d\priorm}$ denotes the Radon-Nikodym 
derivative~\cite{Williams1991} of the posterior measure $\postm$ with respect to $\priorm$, and
$\like(\obs | \ipar)$ denotes the data likelihood. 
Conditions under which the posterior 
measure is well defined and~\eqref{equ:bayes-abstract} holds are given in detail 
in~\cite{Stuart10}. 
Note that we consider a finite-dimensional observation vector, 
motivated by the fact that in practice data are available only at 
a finite number of sensor locations and a finite number of points in time.

In the present work, we consider the Gaussian-linear case, i.e., the
parameter-to-observable map $\iFF$ is linear. Moreover, we assume
an additive noise model, $\obs = \ff\ipar + \vec{\eta}$, 
where $\vec{\eta} \sim \GM{\vec{0}}{\ncov}$ is a centered Gaussian on $\R^q$; the latter implies 
\begin{equation}\label{equ:likelihood}
\like(\obs | \ipar) \propto \exp\Big\{-\half(\iFF\ipar - \vec{d})^T\ncov^{-1}(\iFF\ipar - \vec{d})\Big\}.
\end{equation}
We use a Gaussian prior, $\mu_0 = \GM{\ipar_0}{\C_0}$, 
where $\ipar_0 \in \hilb$ is sufficiently regular, and $\C_0$ is an appropriate 
covariance operator, i.e., $\C_0$ must be symmetric, positive, and
of trace-class. We define
the covariance operator as the inverse of an elliptic differential
operator. A common alternative choice for statistical inverse problems is
to specify a covariance function between any two spatial points, which
results in a dense covariance
matrix. For large-scale problems, however, the construction
and ``inversion'' of such a dense covariance matrix can be
infeasible. 
On the contrary, specifying the covariance as the inverse of an
elliptic differential operator allows to build on existing fast
solvers for elliptic equations.
As detailed in~\cite{Stuart10,Bui-ThanhGhattasMartinEtAl13}, 
the PDE solution operator used as covariance operator $\C_0$ must be 
sufficiently smoothing and have bounded Green's functions. 
For example, the biharmonic operator has bounded Green's functions in
two and three space dimensions.  Therefore, we choose $\C_0=\mc{A}^{-2}$, with
$\mc{A}$ a Laplacian-like operator in the sense of Assumption 2.9 in \cite{Stuart10}.
This choice also allows efficient 
applications of the square root operator $\C_0^{1/2}=\mc{A}^{-1}$,
as required below.
The elliptic PDE corresponding to $\mc A$ written in weak
form is as follows: For $s\in \hilb = L^2(\D)$, the solution $\ipar=\mc A^{-1}s$ satisfies
\begin{equation}
\eqnlab{Wspace}
    \int_\D \alpha \nabla \ipar \cdot \nabla p + \beta \ipar p \,d\xx = \int_\D
    sp\,d\xx, \quad \text{ for all } p \in H^1(\D),
\end{equation}
with $\alpha, \beta > 0$ controlling the variation and the correlation length.
Due to the present choice of the prior and the noise model, and the linearity of $\iFF$, the posterior measure 
is a Gaussian, $\GM{\iparpost}{\C_\text{post}}$ with~\cite[Section 6.4]{Stuart10}, 
\begin{equation}\label{equ:mean-cov}
\C_\text{post} = (\iFFadj \bs{\Gamma}_\text{noise}^{-1} \iFF + \C_0^{-1})^{-1}, 
\qquad 
\iparpost = \C_\text{post}(\iFFadj\ncov^{-1}\obs + \C_0^{-1}\ipar_0),
\end{equation}
where $\iFFadj:\R^q \to \hilb$ is the adjoint of $\iFF$.

\subsection{Discretization of the infinite-dimensional Bayesian inverse problem}
\label{sec:bayes-disc}
In this section, we describe the 
discretization of the Bayesian inverse problem~\eqref{equ:bayes-abstract} in the Gaussian linear 
case. %
We consider a finite-dimensional 
subspace $V_h \subset L^2(\D)$ given by 
$V_h = \span\{\phi_1, \ldots, \phi_n\}$, 
where $\LRc{\phi_j}_{j=1}^n$ are continuous Lagrange nodal basis functions.
Given $\ipar_h \in V_h$, we denote by $\vec{m}$, the vector of its 
coordinates in $V_h$; i.e., for $\ipar_h = \sum_{j=1}^nm_j\phi_j$, we 
have $\vec{m} = \LRp{m_1,\hdots,m_n}^T$.
After this discretization, we replace the task of inferring the parameter 
$\ipar \in L^2(\D)$ with that of inferring the 
coefficients for the finite-element 
approximation $\ipar_h$ of $\ipar$.

Following~\cite{Bui-ThanhGhattasMartinEtAl13}, 
we state the finite-dimensional Bayesian inverse problem such that it is
consistent with the corresponding inference problem in $L^2(\D)$.
Consequently, we work in $\R^n$, with the weighted inner product, $\mip{\cdot\,}{\cdot}$ 
given by $\mip{\vec{x}}{\vec{y}} = \ip{\M\vec{x}}{\vec{y}}$, where $\ip{\cdot}{\cdot}$ denotes 
the Euclidean inner product, and $\M$ is the (symmetric positive definite) finite-element 
mass matrix.
It is convenient to introduce the notation $\R^n_{\M}$ for $\R^n$
when endowed with the $\mip{\cdot\,}{\cdot}$ inner product. Note that the mapping
$\ipar_h \mapsto \vec{m}$ is a Hilbert-space isomorphism between
$V_h$ (with $L^2$-inner product) and $\R^n_{\M}$.
For a linear operator $\matrix{A}:\R^n_\M \to \R^n_\M$, the adjoint
operator is given by $\matrix{A}^* = \M^{-1} \matrix{A}^T \M$.
A linear operator $\matrix{A}$ on $\R^n_\M$ is self-adjoint if $\matrix{A} = \matrix{A}^*$; 
for convenience, we refer to such operators as $\M$-symmetric. 
In the sequel, we will encounter linear mappings  
$\A_1 : \R^n_{\M} \rightarrow \R^q$ and $\A_2 : \R^r \rightarrow \R^n_{\M}$, 
where $\R^q$ and $ \R^r$ are endowed with the Euclidean inner product;
the corresponding adjoints are given by~\cite{Bui-ThanhGhattasMartinEtAl13},
$\Aadj_1 = \M^{-1} \A_1^T$ and $\Aadj_2 = \A_2^T \M$.

For the discretized problem, 
the density for the prior (as a measure over the space $\R^n_{\M}$) is characterized by
\begin{equation}
\label{eq:prior_pdf}
\prior(\dpar) \:\propto\:
 \exp\left\{
- \frac 12 \left\| \A (\dpar - \dpar_{0}) \right\|^2_{\M}
\right\}, 
\end{equation}
with $\A = \M^{-1}\matrix{L}$, where $\matrix{L} = \alpha \matrix{K} + \beta \matrix{M}$
and $\matrix{K}$ is the finite-element stiffness matrix.
Note that $\matrix{A}$ is $\M$-symmetric; moreover, it follows from 
the above definition that $\matrix{\Gamma}_{\text{prior}} = \A^{-2}$.
Subsequently, the posterior is a Gaussian $\GM{\dpar_\text{post}}{\postcov}$
with
\[
   \dpar_\text{post} = \matrix{\Gamma}_{\text{post}} \left(
   \Fadj\matrix{\Gamma}_{\text{noise}}^{-1} \obs +
   \matrix{\Gamma}^{-1}_{\text{prior}} \dpar_0\right), 
   \qquad  
    \matrix{\Gamma}_{\text{post}} = \left(
   \Fadj\matrix{\Gamma}_{\text{noise}}^{-1}\F+\matrix{\Gamma}^{-1}_{\text{prior}}\right)^{-1},
\]
where $\F: \R^n_{\M} \rightarrow \R^q$ is the discretization of the
parameter-to-observable map $\iFF$.

Next,  we summarize how to draw samples from and to compute the variance 
of the discretized posterior measure; here we follow~\cite{Bui-ThanhGhattasMartinEtAl13},
where details of these operations are provided.
For a posterior measure $\GM{\dpar_\text{post}}{\postcov}$ 
generating samples requires
a decomposition of $\postcov$ in the form, $\postcov = \matrix{Q}\matrix{Q}^*$.
Then, to generate a sample $\vec{\nu}$ 
from $\GM{\dpar_\text{post}}{\postcov}$, we draw a realization $\vec{z}$ from $\GM{0}{\matrix{I}}$ 
and compute $\vec{\nu}$ as %
\begin{equation} \label{equ:sample}
   \vec{\nu} = \dpar_\text{post} + \matrix{Q} \M^{-1/2} \vec{z}.
\end{equation}
The discretized covariance is given by
$\cov{m_i}{m_j} = \vec{e}_i^T \postcov \M^{-1} \vec{e}_j$, for $i, j = 1, \ldots, n$. In 
particular, 
\begin{equation}\label{equ:postvar}
\var{m_i} =  \vec{e}_i^T \postcov \M^{-1} \vec{e}_i, \quad i = 1, \ldots, n,
\end{equation}
where $\var{m_i}$ denotes the variance of $m_i$.
\subsection{Randomized linear algebra algorithms}\label{sec:rand}
One of the major components of our method is
a low-rank SVD surrogate for the prior-preconditioned
parameter-to-observable map. To compute such low-rank
surrogates, we use a randomized SVD~\cite{HalkoMartinssonTropp11}. This choice is motivated by
the flexibility and the robustness of the method, and by the fact that,
as opposed to Krylov subspace methods, it
only requires independent matrix-vector products. This aspect is
particularly useful for the large-scale problems we target, in which
the matrix-vector applications involve expensive PDE
solves; see also~\cite{Bui-ThanhBursteddeGhattasEtAl12}.
Randomized 
SVD methods can be made very accurate with negligible probability of 
failure~\cite{HalkoMartinssonTropp11}.

Computing A-optimal designs requires minimizing the trace of large
dense covariance matrices, which, in our target problems, 
usually have a rapidly decaying spectrum and the eigenvalues are 
clustered around 0. For such matrices, which are defined 
implicitly through their applications to vectors, randomized trace 
estimators provide a reasonably accurate approximation of the 
trace with a small number of random 
vectors (see e.g.,~\cite{Roosta-KhorasaniAscher13}).
These estimators involve only matrix-vector products, 
which makes them suitable for large scale problems. 
In particular, randomized trace estimators approximate the trace of a
matrix $\matrix{A} \in \R^{n \times n}$, via Monte-Carlo estimates of the
form $\trace(\matrix{A}) \approx \frac{1}{\Ntr} \sum_{i = 1}^\Ntr [\vec{z}^{(i)}]^T \matrix{A} \vec{z}^{(i)}$, 
where the trial vectors $\vec{z}^{(i)}$ are random $n$-vectors.
A well-known example is the Hutchinson estimator~\cite{Hutchinson90}, 
which uses random vectors $\vec{z}^{(i)}$ with $\pm 1$ entries,
each with a probability of $1/2$.
Another possibility, used in this paper, is the Gaussian trace 
estimator, which uses Gaussian random vectors with independent and identically distributed (i.i.d.) standard normal entries. For a description and analysis of different trace estimators,
we refer to~\cite{AvronToledo11}.

\section{A-optimal design of experiments for infinite-dimensional 
Bayesian linear inverse problems} \label{sec:oed}
In this section, we formulate the A-optimal design problem for 
infinite-dimensional Bayesian linear inverse problems.
The extension of the A-optimal design criterion to the infinite-dimensional
setting is described in Section~\ref{sec:oed-formulation-hilbert}. 
In Section~\ref{sec:oed-formulation-design}, we specify the mathematical 
definition of a \emph{design}, and describe how the design 
is introduced in the Bayesian inverse problem.
Finally, in Section~\ref{sec:oed-formulation-discrete}, we formulate the resulting 
OED optimization problem. 

\subsection{A-optimal design in Hilbert spaces}\label{sec:oed-formulation-hilbert}
In a finite-dimensional inference problem, an A-optimal design minimizes
the average posterior variance of the inference parameters~\cite{Ucinski05}. In the linear case, 
this is accomplished by minimizing the trace of the posterior \emph{covariance matrix}. 
Since the present work concerns inference problems with a 
random field as the inference parameter,
we first extend the notion of an
A-optimal design to the infinite-dimensional Hilbert space setting.

The average posterior variance of $\ipar$ over the
physical domain $\D$ is given by 
\[
\frac{1}{|\D|} \int_\D c_\text{post}(\vec{x},\vec{x}) \, d\vec{x}, 
\]
where $|\D|$ denotes the Lebesgue measure of the domain $\D$, and
$c_\text{post}$ is the \emph{covariance function} of $\ipar$:
\[
c_\text{post}(\vec{x},\vec{y}) = 
\ave{\Big(\ipar(\vec{x}) - \iparpost(\vec{x})\Big)\Big(\ipar(\vec{y}) - \iparpost(\vec{y})\Big)},
\qquad \vec{x}, \vec{y} \in \D,
\]
where $\ave{\cdot}$ denotes the expectation operator. Note that $c_\text{post}$
is related to the \emph{covariance operator} $\C_\text{post}$, i.e.,
\[
    [\C_\text{post} u](\vec{x}) = \int_\D c_\text{post}(\vec{x},\vec{y})u(\vec{y}) \, d\vec{y},
    \quad
    u \in L^2(\D).
\]
The covariance operator $\C_\text{post}$ is positive, symmetric, and of trace-class, 
and thus has real eigenvalues, $\{\lambda_i\}_{i=1}^\infty$, and a complete orthonormal
set of eigenvectors, $\{e_i\}_{i=1}^\infty$. Mercer's Theorem~\cite{Mercer1909,Lax02} then
provides a decomposition of the covariance function $c_{\text{post}}$
from the spectral decomposition of $\C_{\text{post}}$ through
$
   c_\text{post}(\vec{x},\vec{y}) = \sum_j \lambda_j e_j(\vec{x})e_j(\vec{y}), 
$
where the convergence of the infinite sum is uniform and absolute in $\D \times \D$. 
From this representation, one obtains,
\[
   \int_{\D} c_\text{post}(\vec{x},\vec{x})\,d\vec{x} = \sum_j \lambda_j = \trace(\C_\text{post}).
\]
Thus, we formulate the A-optimal design problem as that of minimizing
the trace of the covariance \emph{operator}, $\trace(\C_\text{post})$.
This shows that the definition of an A-optimal design in finite
dimensions, namely the minimization of the average variance of the
estimates, extends naturally to infinite-dimensional Bayesian inverse
problems involving a random-field as the inversion parameter.  We
emphasize that the trace of the posterior covariance operator in the
A-optimal design criterion is well defined due to the proper choice of
the prior measure for infinite-dimensional inference problems.

\subsection{Introducing the design in the Bayesian inverse problem} \label{sec:oed-formulation-design}
Let us first specify the notion of a design in the context of our target
applications, namely optimal sensor placement.  We use a
finite-dimensional design space, that is, we fix a set of points
$\vec{x}_i$, $i = 1, \ldots, \Ns$ as the set of candidate sensor
locations, and associate to each $\vec{x}_i$ a non-negative weight
$w_i \in \R$; see
also~\cite{Ucinski05,HaberHoreshTenorio08,HaberMagnantLuceroEtAl12}.
The OED problem is then formulated as an optimization problem over the
weight vector, $\vec{w} = (w_1, \ldots, w_\Ns)^T$.  The
points $\vec{x}_i$, $i = 1, \ldots, \Ns$ can be thought of as a
discretization of the \emph{sensor domain}, which is a subset of $\D$.

The design weights can have different interpretations. For example, in
classical formulations such as in~\cite{Ucinski05}, $w_i$ define a
probability mass function, i.e., $w_i \geq 0$ and $\sum w_i = 1$; one
may then interpret large weights as ones indicating promising
locations for placing sensors. If the inversion is based on a
repeatable experiment, weights can also be used to specify the number
of experiments performed at each sensor location to control the
observation noise; see, e.g., \cite{HaberHoreshTenorio08}.  In many
inverse problems, however, experiments cannot be repeated or the
mathematical model is not an exact representation of the physical
phenomenon underlying the observations, such that the observation
error cannot be controlled.  Thus, %
we prefer weight vectors containing zeros and ones only,
indicating absence or presence of sensors over the candidate sensor
grid. Unfortunately, solving optimization problems for vectors with
binary components is a difficult combinatorial problem. Hence, we
employ a relaxation of the problem and consider weights $w_i \in [0,
  1]$. In Section~\ref{sec:sparsity}, we devise a method of recovering
the desired 0--1 structure using sparsifying penalties
combined with a continuation procedure.

Next, we describe the process of introducing the design vector $\vec{w}$ into the 
Bayesian inverse problem.
Since the design guides the collection of data, the weight vector $\vec{w}$ enters the 
inference problem~\eqref{equ:bayes-abstract} through the data likelihood~\eqref{equ:likelihood}.
The $\vec{w}$-weighted data-likelihood is given by,
\begin{equation}\label{equ:weighted-likelihood}
\like(\obs | \ipar; \vec{w}) \propto \exp\Big\{-\half(\iFF\ipar - \vec{d})^T\WW^{1/2}\ncov^{-1}\WW^{1/2}(\iFF\ipar - \vec{d})\Big\},
\end{equation}
where $\WW \in \R^{q \times q}$ is a diagonal matrix with weights on its diagonal.
For time-dependent problems, with observations collected at sensor locations at discrete points in time, 
we have $\iFF:\hilb \to \R^q$ with $q = \Ns \Ntau$, where $\Ns$ and $\Ntau$ are the 
number of candidate sensors and observation times, respectively.
Here, $\WW$ is a block-diagonal matrix having $\Ntau$ blocks, 
where each block is an $\Ns \times \Ns$ diagonal matrix with $\vec{w}$ on its diagonal.
The posterior covariance operator is thus given by 
\begin{equation}\label{eq:C0}
\C_\text{post}(\vec{w}) = (\iFFadj \WW^{1/2}\ncov^{-1}\WW^{1/2} \iFF + \C_0^{-1})^{-1}.
\end{equation}
From this point on, we work with the discretization 
of the infinite-dimensional problem, where we follow the discretization strategy 
described in Section~\ref{sec:bayes-disc}. 
In particular, we have the following discretized posterior covariance,  
\begin{equation}\label{eq:C0disc}
\postcov(\vec{w}) = (\Fadj \WW^{1/2}\ncov^{-1}\WW^{1/2} \FF + \prcov^{-1})^{-1}.
\end{equation}
The mean $\dparpost$ of the discretized posterior measure coincides
with the maximum a posteriori probability (MAP) estimate, given as the
solution of the minimization problem,
\begin{equation*}
    \min_{\dpar \in \R^\Nm} 
    \frac12
    \displaystyle\ip{\ncov^{-1}\WW^{1/2}\Big(\vec{d} - \FF\vec{m}\Big)}{\WW^{1/2}(\vec{d} - \FF\vec{m})} 
             + \frac12 \mip{\prcov^{-1}\Big(\vec{m} - \vec{m}_0\Big)}{\vec{m}-\vec{m}_0}.
\end{equation*}
The inverse of the Hessian $\H(\vec{w})$ of the above functional coincides with
$\postcov(\vec{w})$ defined in \eqref{eq:C0disc}. In the following, we
call the Hessian of the misfit term,
\[
   \HM(\vec{w}) = \Fadj\WW^{1/2}\ncov^{-1}\WW^{1/2}\F,
\] 
the \emph{misfit Hessian};
thus, $\H(\vec{w}) = \HM(\vec{w}) + \prcov^{-1}$.
In this paper, we consider
the case of independent observations, and hence $\ncov$ is a diagonal
matrix.  This assumption turns $\WW^{1/2}\ncov^{-1}\WW^{1/2}$ into a
diagonal matrix with diagonal entries $w_i/\sigma_i^2$.  For
simplicity of the presentation, we further assume that $\ncov$ is a
constant multiple of identity, i.e., $\ncov = \sigma \matrix{I}$, and
set $\sigma = 1$. Thus, the misfit Hessian takes the form $\HM =
\Fadj\WW\FF$. The algorithms presented below can be easily 
modified to accommodate a general diagonal noise covariance matrices $\ncov$.

\subsection{The OED problem} \label{sec:oed-formulation-discrete}
Finally, we can formulate the Bayesian A-optimal experimental design
problem as optimization problem for the weight vector $\vec w$.
Following the discussions in the previous sections, the OED objective
function is the trace
of the posterior covariance operator~\eqref{eq:C0}, which, after
discretization is given by
\eqref{eq:C0disc}. Additionally, we use a penalization to control the
sparsity of the design. Hence, the optimal design vector is the solution to the following 
optimization problem:
\begin{equation}\label{equ:oed-generic}
     \begin{aligned}
         \min_{\vec{w} \in \R^\Ns} &\quad \trace\big[\postcov(\vec{w})\big] + \upgamma \Phi(\vec{w}),\\
         \mbox{subject to}         &\quad 0 \leq w_i \leq 1, \quad i = 1, \ldots, \Ns,
     \end{aligned}
\end{equation}
where $\Phi:\R^\Ns_+ \to [0, \infty)$ is a penalty function and
  $\gamma\ge 0$ controls the sparsity of the design.  We note that the
  function $\vec{w} \mapsto \trace\big[\postcov(\vec{w})\big] =
  \trace\big[\H(\vec{w})^{-1}\big]$ is strictly convex due to strict
  convexity of $\matrix{X}\mapsto \trace(\matrix{X}^{-1})$ on the cone
  of symmetric positive definite matrices (see~\cite[p.~82]{Pazman86})
  and the fact that $\vec{w}$ enters linearly in
  $\H(\vec{w})$. Therefore, if the penalty function $\Phi$ is convex,
  \eqref{equ:oed-generic} has a unique solution.  An example of a
  convex penalty is given by $\Phi(\vec{w}) = \vec{1}^T \vec{w}$,
  i.e., an $\ell_1$-penalty, whose sparsening property has been used
  extensively in compressive
  sensing~\cite{Donoho06a,Cand`esRombergTao06} and has also been
  adapted to OED for inverse
  problems~\cite{HaberHoreshTenorio08,HaberMagnantLuceroEtAl12}.  In
  this paper, we will use a continuation approach involving a family
  $\{ \Phi_\eps \}_{\eps > 0}$ of penalty functions, which approximate
  the $\ell_0$-``norm''. This allows us to find binary optimal design vectors.

We remark that there exists an alternative interpretation of the
A-optimal design criterion in the Gaussian linear case considered
here. Namely, minimizing the trace of the posterior covariance is
equivalent to minimizing the average mean square error (MSE) of the
posterior mean, where the average is with respect to the prior
measure. This average MSE is also referred to as the Bayes risk of the posterior mean.
MSE is a concept in frequentist inference, in which the
posterior mean is interpreted as an estimator for the unknown
parameter. This frequentist point of view of A-optimal design is used
in~\cite{HaberHoreshTenorio08,HaberMagnantLuceroEtAl12}.
For completeness of our presentation, we detail this relation
between average MSE and the trace of posterior covariance
in Appendix~\ref{appdx:MSE}.

\section{Numerical solution of the OED problem~\eqref{equ:oed-generic}}\label{sec:oed-algorithms}
We begin this section with deriving a decomposition of
the misfit Hessian in terms of contributions from different sensors.
Then, in Section~\ref{sec:aopt-analytic}, 
we present an approximation of $\trace(\postcov)$ using a randomized trace estimator, 
and derive expressions for the gradient of the resulting OED objective function.
Subsequently, in Sections~\ref{sec:lowrank-approx} and~\ref{sec:aopt-alg},
we present algorithmic components that allow efficient realization of these computations.
Finally, in Section~\ref{sec:sparsity}, we
discuss methods to control the sparsity of the design, i.e., the
number of allocated sensors.

\subsection{Decomposition of the misfit Hessian}
\label{sec:forward_map}

The misfit Hessian plays an important role in the derivative computation
of the OED objective function with respect to the sensor
location weights.
Therefore, we first
derive a decomposition of~$\HM$ as 
a weighted sum of terms corresponding to individual sensor locations.

We consider a linear parameter-to-observable map $\FF$, which
involves a time-dependent PDE. Parameter-to-observable maps with stationary equations
are included as special case in the discussion below by considering a
single time step.
We consider observations at the candidate sensor locations $\vec{x}_1,
\ldots, \vec{x}_\Ns$ in $\D$. For each sensor location, the time
evolution of the observation is discretized using Lagrange elements in
time (we use piecewise linear elements in this paper) for the nodal
time instances $\tau_1, \ldots, \tau_\Ntau$ ($\tau_i \in [0, T]$
for $i=1,\ldots,\Ntau$). These observation times are independent of
the time steps used in the integration of the PDE.

The parameter-to-observable map $\FF$ takes a parameter vector
$\vec{m} \in \R^\Nm$ and maps it to the space-time observation vector
$\dd \in \R^{\Ns\Ntau}$:
\begin{equation}\label{equ:param-to-obs}
\FF:\vec{m} 
   \,
   \stackrel{\sol}{\longmapsto}  
   \,
   \uu 
   \,
   \stackrel{\BB}{\longmapsto} 
   \,
   \dd. %
\end{equation}
Here,  $\sol$ is the discretized PDE solution operator, $\uu \in \R^{n(\Nt+1)}$ is the space-time 
solution vector, and $\BB$ is the space-time observation operator.
We target designs where the sensor locations coincide for all time
observations. Thus, the sensor weight matrix
$\WW$ can be written as $\WW = \sum_{j = 1}^\Ns w_j \matrix{E}_j$, 
where $\matrix{E}_j$ is an $\Ns\Ntau \times \Ns\Ntau$ block-diagonal matrix, with 
$\Ntau$ blocks, with each block equal to $\vec{e}_j \otimes \vec{e}_j
= \vec{e}_j\vec{e}_j^T$; here $\vec{e}_j$ denotes the $j^{\text{th}}$ coordinate vector in $\R^\Ns$.
Thus, the misfit Hessian for a weight vector $\vec w$ can be decomposed as
$
\HM(\vec{w}) = \Fadj \WW \F = \sum_{j = 1}^\Ns w_j  \Fadj \matrix{E}_j \FF,
$
where the matrices $\Fadj \matrix{E}_j \FF$, $j = 1, \ldots, \Ns$
are the \emph{atoms} corresponding to the different sensor locations.
This decomposition, which is also used in~\cite{Ucinski05}, reveals the identity
\begin{equation} \label{equ:hessian-partial}
\delj{\HM(\vec w)} = \Fadj \matrix{E}_j \FF.
\end{equation}
Next, we approximate the OED objective function using
trace estimators.

\subsection{The OED objective function and its derivative}\label{sec:aopt-analytic}
We consider the problem~\eqref{equ:oed-generic} and recall that
$\postcov(\vec{w}) = \H(\vec{w})^{-1}$, where the Hessian is an
$\M$-symmetric linear mapping on $\R^\Nm_M$.  For the numerical
solution of the OED problem, we approximate the trace of $\H(\vec
w)^{-1}$ using a randomized trace estimator (see
Section~\ref{sec:rand}). 
The trace estimator-based OED objective functional is
\begin{equation}\label{equ:aopt_obj}
 \Theta(\vec{w}) := 
                       \frac{1}{\Ntr} \sum_{i = 1}^\Ntr \mip{\vec{z}^{(i)}}{\matrix{H}(\vec w)^{-1}\vec{z}^{(i)}},  
\end{equation}
where $\vec{z}^{(i)} = \M^{-1/2} \vec{y}^{(i)}$, $i = 1, \ldots, \Ntr$, with
$\vec{y}^{(i)}$ appropriately chosen random vectors\footnote{In the
  present work, we rely on Gaussian trace estimators. See
  Appendix~\ref{appdx:trace} for a justification of trace estimation in the context of weighted inner products.}.
Therefore, we consider the following OED optimization problem in our numerical computations,
\begin{equation}\label{equ:oed-est}
     \begin{aligned}
         \min_{\vec{w} \in \R^\Ns} & \quad \Theta(\vec{w}) + \upgamma
       \Phi(\vec{w}), \\
         \mbox{subject to}        & \quad 0 \leq w_i \leq 1, \quad i = 1, \ldots, \Ns.
     \end{aligned}
\end{equation}
Since we will use gradient-based optimization methods to
solve~\eqref{equ:oed-est}, we need to compute the gradient of
$\Theta(\vec w)$ with respect to $\vec w$. For $j\in
\{1,\ldots,\Ns\}$, we obtain
\begin{equation*}
\begin{aligned}
   \delj{\Theta(\vec w)} &= \frac1{\Ntr} \sum_{i = 1}^\Ntr \mip{\vec{z}^{(i)}}{\delj{\matrix{H}(\vec w)^{-1}}\vec{z}^{(i)}}
                  \\&= -\frac{1}{\Ntr} 
                     \sum_{i = 1}^\Ntr \mip{\vec{z}^{(i)}}{\H(\vec w)^{-1}\delj{\matrix{H}(\vec w)}\H(\vec w)^{-1}\vec{z}^{(i)}}.
\end{aligned}
\end{equation*}
The $\M$-symmetry of $\H(\vec w)^{-1}$, the fact that the prior does not
depend on $\vec w$, and denoting $\vec{q}^{(i)} =
\H^{-1}\vec{z}^{(i)}$
yields
\begin{equation}\label{equ:varphi-deriv}
   \delj{\Theta(\vec{w})} =  -\frac{1}{\Ntr} \sum_{i = 1}^\Ntr \mip{\vec{q}^{(i)}}{\delj{\HM(\vec w)}\vec{q}^{(i)}}, \quad
   j = 1, \ldots, \Ns.
\end{equation}
Using~\eqref{equ:hessian-partial} and denoting $\dd^{(i)} = (\vec
d^{(i),1} , \ldots, \vec d^{(i),\Ntau})^T := \FF\vec{q}^{(i)}$, with
$\vec{d}^{(i)} \in \R^\Ns$ the spatial observations corresponding
to time $\tau_i$, each of the summands in \eqref{equ:varphi-deriv} can
be computed as
\begin{equation} \label{equ:summand}
\begin{aligned}
    \mip{\vec{q}^{(i)}}{\delj{\HM(\vec w)}\vec{q}^{(i)}}%
                                         &= \ip{\FF\vec{q}^{(i)}}{\matrix{E}_j\FF \vec{q}^{(i)}} \\
                                         &= \sum_{\ell = 1}^\Ntau [\vec{d}^{(i),\ell}]^T 
                                           (\vec{e}_j\otimes \vec{e}_j) [\vec{d}^{(i),\ell}]
                                         = \sum_{\ell = 1}^\Ntau \big[ d^{(i),\ell}_j \big]^2. %
\end{aligned}
\end{equation}
To summarize, the evaluation of $\Theta(\vec w)$ requires $\Ntr$
multiplications of vectors with $\matrix H(\vec w)^{-1}$. The
computation of the derivative of $\Theta(\vec w)$ with respect to
$\vec w$ additionally requires $\Ntr$ evaluations of the
parameter-to-observable map $\FF$. 
In the Sections~\ref{sec:lowrank-approx} and~\ref{sec:aopt-alg} below, we discuss
the efficient realization of these computations.

\subsection{Low-rank approximation of the prior-preconditioned parameter-to-observable map}
\label{sec:lowrank-approx}
As shown above, the repeated application of $\H(\vec w)^{-1}$ is
necessary to compute $\Theta (\vec w)$ and the gradient of
$\Theta(\vec w)$ with respect to $\vec w$.  In a numerical
optimization algorithm to solve \eqref{equ:oed-est}, these
computations are required in each iteration. Despite the use of a
trace estimator, this is computationally demanding and can render OED
for large-scale Bayesian inverse problems infeasible.

As a remedy, we construct a surrogate model for the
parameter-to-observable map that can be used to efficiently compute
the application of $\matrix H(\vec w)^{-1}$ to vectors. The surrogate
construction exploits the fact that for a large class of
infinite-dimensional inverse problems, $\FF$ can be well approximated
by a low-rank operator due to properties of the underlying PDE and the
limited number of observations; see for instance
\cite{FlathWilcoxAkcelikEtAl11,Bui-ThanhGhattasMartinEtAl13}. Thus, we
can compute a low-rank SVD surrogate for $\FF$ \emph{upfront}, and use
this surrogate in all subsequent computations involving $\FF$ to find
an optimal experimental design.

Since only parameters consistent with the data {\em and} the prior
have a significant posterior probability, only those parameters can
influence the OED. Thus, it suffices to compute a surrogate
of the \emph{prior-preconditioned} parameter-to-observable map, $\FT :=
\FF \prcov^{1/2}$.  The smoothing property of the priors usually
employed in infinite-dimensional Bayesian inversion\footnote{Note that
  a smoothing property is necessary to render the infinite-dimensional
  Bayesian inverse problem well-posed, \cite{Stuart10}.} results in
$\FT$ having faster decaying singular values than $\FF$.  Thus, the
number of forward and adjoint PDE solves required to construct an
accurate low-rank SVD surrogate for $\FT$ is usually smaller than that for
$\FF$.  For the model problem in Section~\ref{subsec:lowrankF}, this
results in a significant speedup for the construction of the low-rank
surrogate.

We employ randomized SVD (see Section~\ref{sec:rand}) to
compute the low-rank surrogate of $\FT$. This algorithm only requires
the applications of $\FT$ and $\FT^*$ to a set of independent random
vectors, which is convenient and can often be implemented efficiently
in large-scale computations.  In the next section, we show how the
low-rank surrogate of $\FT$ can be used to compute $\Theta(\vec
w)$ and its gradient with respect to $\vec w$.

\subsection{Efficient computation of $\Theta (\vec w)$ and its derivatives}
\label{sec:aopt-alg}
First, we
discuss the efficient computation of $ \H(\vec
w)^{-1}\vec{z}$ for $\vec{z} \in \R^\Nm$, as required to evaluate
$\Theta(\vec w)$ and its derivative.
Note that the inverse of the
Hessian $\H(\vec w)$ can be written
as~\cite{Bui-ThanhGhattasMartinEtAl13, MartinWilcoxBursteddeEtAl12},
\begin{equation}\label{eq:Homegainv}
    \H(\vec w)^{-1} =
              \prcov^{1/2}\big( \HT(\vec w)  + \matrix{I})^{-1}\prcov^{1/2},
\end{equation}
where, for a weight vector $\vec w\in \mathbb R^\Ns$, $\HT(\vec w) =
\FT^* \WW \FT$ is the \emph{prior-preconditioned} misfit Hessian.
Assuming a rank-$r$ surrogate $\FT_r$ for $\FT$ is available,
$\HTr(\vec w) = \FT^*_r \WW \FT_r$ is the resulting approximation of
$\HT(\vec w)$ used below. Note that for every weight vector $\vec w$,
the rank of $\HTr(\vec w)$
is less than or equal to $r$. One approach to apply $(\HTr +
\matrix{I})^{-1}$ to a vector is to first compute the spectral decomposition
$\HTr(\vec w) =
\matrix{V}_r \matrix{\Lambda}_r \matrix{V}_r^*$, where $\matrix{V}_r$ is
the matrix containing the eigenvectors of $\HTr(\vec w)$ as its columns, and
$\matrix{\Lambda}_r$ is the diagonal matrix with $r$ largest eigenvalues on the
diagonal.\footnote{Note that thanks to the low-rank surrogate of
  the prior-preconditioned parameter-to-observable map $\FT_r$, this
  computation does not require the solution of PDEs.}  Then, by
the Sherman-Morrison-Woodbury formula~\cite{GolubVan96},
we have $(\matrix{V}_r
\matrix{\Lambda}_r\matrix{V}_r^* + \matrix{I})^{-1} = \matrix{I} -
\matrix{V}_r\matrix{D}_r\matrix{V}_r^*$, where $\matrix{D}_r$ is a
diagonal matrix with $\lambda_i / (1 + \lambda_i)$, $i = 1, \ldots,
r$, on its diagonal.\footnote{Note that there exist alternatives to
  computing a spectral decomposition of $\HTr(\vec w)$ and then using the
  Sherman-Morrison-Woodbury formula. For instance, one can use a
  Krylov method to compute $\H(\vec w)^{-1}\vec{z}$.} Defining 
\begin{equation}\label{equ:qhat-SMW}
 \vec{\hat{q}} =
 \big(\matrix I - \Vr \Dr \Vr^*\big) \prcov^{1/2} \vec{z}
\end{equation}
allows us to compute
\begin{equation} \label{equ:compute-q}
\vec q = \H(\vec w)^{-1}\vec z \approx  \prcov^{1/2}\vec{\hat{q}},
\end{equation}
where the approximation is only due to the use of $\FT_r$ instead of $\FT$.
To compute $\FF \vec{q}$, as needed in~\eqref{equ:summand}, we use 
\begin{equation}\label{equ:compute-Fq}
\FF \vec{q} = \FF \prcov^{1/2} (\HT(\vec w) + \matrix{I})^{-1} \prcov^{1/2} \vec{z} \approx \FT_r \vec{\hat{q}},
\end{equation}
where the approximation is again due to the use of the low-rank
surrogate for $\FT$. 
We now summarize the procedure for computing $\Theta(\vec w)$
and its gradient. Note that once %
$\FT_r$ is available, our method does not require further forward or
adjoint PDE solves. 
\begin{algorithm}
\caption{Algorithm for computing $\Theta(\vec w)$ and its gradient $\vec{g} = \Grad\Theta(\vec w)$ 
with random vectors $\vec{z}^{(i)}$, $i = 1, \ldots, \Ntr$ for the trace estimator.}
\begin{algorithmic}
\STATE Compute $\HTr(\vec{w}) = \matrix{V}_r \matrix{\Lambda}_r \matrix{V}_r^*$
       \hfill \COMMENT{low-rank approx to $\HT(\vec{w}) = \FT^* \WW \FT$}
\STATE Set $\matrix{D}_r = \diag\Big( \frac{\lambda_1}{1 + \lambda_1}, \ldots, \frac{\lambda_r}{1 + \lambda_r} \Big)$
\FOR{$i = 1$ to $\Ntr$}
   \STATE Compute
       $\vec{\hat{q}}^{(i)} = \Big( \matrix{I} - \matrix{V}_r \matrix{D}_r \matrix{V}_r^*\Big) \prcov^{1/2} \vec{z}^{(i)}$
       \hfill \COMMENT{as in~\eqref{equ:qhat-SMW}}
   \STATE Compute
       $\vec{q}^{(i)} = \prcov^{1/2} \vec{\hat{q}}^{(i)}$
       \hfill \COMMENT{compute $\vec{q}^{(i)} = \matrix{H}^{-1}\vec{z}^{(i)}$ as in~\eqref{equ:compute-q}}
\ENDFOR
\STATE Compute $\displaystyle \Theta(\vec{w}) = \frac{1}{\Ntr} \sum_{i = 1}^\Ntr \mip{\vec{z}^{(i)}}{\vec{q}^{(i)}}$
\hfill \COMMENT{evaluation of $\Theta(\vec{w})$}
\STATE $\vec{g} = \vec{0}$ 
\hfill \COMMENT{initialize the gradient vector}
\FOR{$i = 1$ to $\Ntr$}
   \STATE $\dd = \FT_r \vec{\hat{q}}^{(i)}$
   \hfill \COMMENT{compute $\dd = \FF \vec{q}^{(i)}$ as in~\eqref{equ:compute-Fq}}
   \FOR{$j = 1$ to $\Ns$}
      \STATE $\displaystyle g_{i,j} = \sum_{\ell = 1}^{\Ntau} \left( d_j^\ell\right)^2$ 
       \hfill \COMMENT{$\displaystyle g_{i,j} = \mip{\vec{q}^{(i)}}{\delj{\HM}\vec{q}^{(i)}}$ as in~\eqref{equ:summand}}
   \ENDFOR
   \STATE $\vec{g} = \vec{g} - \frac{1}{\Ntr}\vec{g}_i$
\ENDFOR
\STATE 
\end{algorithmic}
\label{alg:aopt}
\end{algorithm}

We close this subsection with two remarks concerning the computations
involved in Algorithm~\ref{alg:aopt}.
First, note that while the
low-rank SVD surrogate for $\FT$ relieves the OED method of repeated
PDE solves, we still require the repeated applications of
$\prcov^{1/2}$.  Due to our choice of the prior as the inverse of a
squared elliptic operator, this amounts to an elliptic solve, for
which optimal complexity solvers (e.g., multigrid) are available.

Next, note that we require the application of $\M^{-1/2}$ to random
vectors for the trace estimator in the OED method; see
\eqref{equ:aopt_obj}.  The same operation, which transforms vectors
from the Euclidean space $\R^n$ to $\R^n_\M$ (see
Section~\ref{sec:bayes-disc}) via a mapping that preserves the
inner products, is also needed to draw samples from the posterior
distribution (see~\eqref{equ:sample}).  Since the explicit computation
of the matrix square root is expensive in high dimensions, we utilize
an iterative algorithm to compute the application of the inverse
matrix square root to vectors~\cite{ChenAnitescuSaad11}.  The method
relies on the approximation of the square root via orthogonal
polynomials on an interval containing the spectrum of the matrix;
since the convergence of this method is fastest for matrices with
clustered eigenvalues, we apply it to the mass matrix $\M$,
symmetrically preconditioned with the lumped mass matrix, $\ML$. The
eigenvalues of the resulting matrix, $\Mt = \ML^{-1/2}\M\ML^{-1/2}$,
are clustered around 1, resulting in fast convergence when applying
$\Mt^{-1/2}$ to vectors as discussed in~\cite{ChenAnitescuSaad11}.
Then, instead of $\M^{-1/2}$ we use the matrix $\matrix{L} =\ML^{-1/2}
\Mt^{-1/2}$ as the isomorphism between $\R^n$ and $\R^n_\M$.  Note, in
particular, that for all $\vec{x}$ and $\vec{y}$ in $\R^n$ %
\[
\mip{\L \vec{x}}{\L \vec{y}} 
= \vec{x}^T \Mt^{-\half}\ML^{-\half} \M \ML^{-\half} \Mt^{-\half} \vec{y} %
= \vec{x}^T \Mt^{-\half} \Mt \Mt^{-\half} \vec{y} = \ip{\vec{x}}{\vec{y}},
\]
which shows that $\L$ is a Hilbert space isomorphism between $\R^n$
and $\R^n_\M$.

\subsection{Sparsity-enforcing penalty functions}
\label{sec:sparsity}
Here, we discuss the penalty term $\upgamma\Phi(\cdot)$
in~\eqref{equ:oed-generic}.  If $\Phi(\cdot)$ favors sparse solutions,
the degree of sparsity depends on the choice of $\upgamma$. However,
$\upgamma$ only provides an indirect control of the sparsity of the
design and in practice adjusting $\upgamma$ and resolving
\eqref{equ:oed-generic} might be necessary to obtain a design with a
specified (or close to a specified) number of sensors.

Next, we discuss two possible choices for the penalty function $\Phi(\cdot)$.
An $\ell_1$-penalty
function $\Phi(\vec{w}) = \vec{1}^T \vec{w}$ (see also
\cite{HaberHoreshTenorio08,HaberHoreshTenorio10,HaberMagnantLuceroEtAl12}),
is known to result in sparse solution vectors $\vec w$. While this choice is convenient due to the
convexity of $\Phi$, it results in a weight vector $\vec w$ whose
components can take on any value in $[0,1]$. Unfortunately, the direct
interpretation of weights $w_i\not\in\{0,1\}$ for the placement of
sensors is unclear.  A practical approach to use such a solution
vector is to place sensors in locations where the corresponding
weight does not vanish.
An alternative to
$\ell_1$-sparsification is to use non-convex functions for $\Phi$ that
directly lead to binary weight vectors. This approach is
discussed next.

To obtain binary weight vectors, we select penalties that
successively approximate the $\ell_0$-``norm'',\footnote{Note that
  $\norm{\vec{x}}_0$ is in fact not a \emph{norm}.} which, for
$\vec{x} \in \R^n$, $\norm{\vec{x}}_0$ is defined as the number of
non-zero elements in $\vec{x}$.  One choice for such a family of
penalizations is $\Phi_q(\vec{w}) = \norm{\vec{w}}_q^q$, for $q <
1$. Since these penalization functions are not Lipschitz continuous at the origin,
we use an alternative family of regularizations, namely, for $\eps>0$,
\begin{equation}\label{eq:feps}
 \Phi_\eps(\vec{w}) := \sum_{i = 1}^\Ns f_\eps(w_i),\quad \text{ where }\:\:
   f_\eps(w) = \begin{cases}\displaystyle 
               \frac{w}{\eps}, \quad &0 \le w \le \frac12 \eps,\\
               p_\eps(w), \quad &\frac12\eps < w \le 2\eps, \\
               1,   \quad &2\eps < w \le 1.
               \end{cases}
\end{equation}
Here, $p_\eps(\cdot)$ is the uniquely defined third order polynomial
that makes $f_\eps(x):[0, 1] \to [0, 1]$ continuously differentiable. 
In Figure~\ref{fig:penfun}, we show these penalty functions for
different values of $\eps$; note how $f_\eps(\cdot)$ approximates the
$\ell_0$-``norm'' as $\eps\to 0$.
\tikzset{dashdot/.style={dash pattern=on .4pt off 3pt on 4pt off 3pt}}
\begin{figure}[ht]\centering
\begin{tikzpicture}[]
\begin{axis}[compat=newest, width=4.5cm, height=3.8cm, scale only axis,
    xlabel = $x$, ylabel=$f_\eps(x)$, xmin=0, xmax=1,
    ymin=0, ymax=1.1,
    legend style={font=\small,nodes=right}, legend pos= south east]
\addplot [color=black, thick] table[x=x,y=y] {penfun1.txt};
\addlegendentry{$\eps = 1/2$}
\addplot [color=black, thick, dashed] table[x=x,y=y] {penfun2.txt};
\addlegendentry{$\eps = 1/4$}
\addplot [color=black, thick, dotted] table[x=x,y=y] {penfun3.txt};
\addlegendentry{$\eps = 1/8$}
\addplot [color=black, thick, dashdot] table[x=x,y=y] {penfun4.txt};
\addlegendentry{$\eps = 1/16$}
\end{axis}
\end{tikzpicture}\vspace{-2ex}
\caption{Graphs of $f_\eps(\cdot)$ as defined in \eqref{eq:feps} for $\eps = 1/2^i$, $i = 1, \ldots, 4$.}
\label{fig:penfun}
\end{figure}

To cope with potential multiple local minima due to the non-convexity
of the penalties $\Phi_\eps$, we use a continuation procedure
with respect to $\eps$. We fix the regularization parameter $\upgamma$
in~\eqref{equ:oed-generic} and solve the OED problem with
$\ell_1$-penalty, resulting in a weight vector $\vec{w}^0$.  We then
choose a decreasing sequence of positive real numbers $\{\eps_i\}_{i
  \geq 1}$, and compute a new weight vector $\vec{w}^i$ by
solving~\eqref{equ:oed-generic} with penalty function
$\Phi_{\eps_i}(\vec{w})$ using $\vec{w}^{i-1}$ as initialization.  In
practice, once $\eps_i$ is sufficiently small, the solution
$\vec{w}^i$ is a 0--1 vector that remains unchanged as $\eps$ is
further decreased.

A similar continuation strategy is used in
topology optimization~\cite{Bendsoe95,BendsoeSigmund03}, where the
target is to design an ``optimal structure'' by deciding on an optimal
distribution of material in a physical domain $\D$.  In this
application, one seeks a \emph{density} function $\rho:\D \to \{0,
1\}$ that characterizes the absence or presence of material in $\D$.
One of the main approaches
to solving topology optimization problems is to relax the requirement
of $\rho \in \{0, 1\}$ to $0 \leq \rho \leq 1$ and to solve a sequence
of optimization problems with successively steeper penalty functions
to approach a 0--1 solution $\rho$---similar our $\Phi_\eps$-continuation approach outlined above.

In the rest of this paper, we refer to the designs obtained via the above 
continuation method as \emph{$\Phi_\eps$-sparsified designs}. 
Optimal designs obtained by solving~\eqref{equ:oed-generic} 
with $\ell_1$-penalty (followed by thresholding to obtain a 0--1 weight 
vector) are referred to as \emph{$\ell_1$-sparsified designs}.  
In the numerical experiments presented in Section~\ref{sec:numerics}, 
we compare $\ell_1$- and $\Phi_\eps$-sparsified designs.
\section{Model problem setup}\label{sec:model}
Here, we present the model problem used
to study the methods for OED presented in this paper.  We consider a
time-dependent advection-diffusion equation in which we seek to infer
an unknown initial condition from point measurements; see also
\cite{AkcelikBirosDraganescuEtAl05,FlathWilcoxAkcelikEtAl11,PetraStadler11}.

\subsection{The parameter-to-observable map}
The PDE in the parameter-to-observable map models diffusive transport
in a domain $\D \subset \R^d$, which is depicted in
Figure~\ref{fig:ad_dom}(a) for $d = 2$ and~\ref{fig:ad_dom}(c) for $d
= 3$.  The domain boundaries $\del \D$ include the outer edges/faces
as well as the internal boundaries of the rectangles/hexahedra, which
model buildings.
The parameter-to-observable map maps an initial condition $\ipar\in
L^2(\D)$ to spatial and temporal point observations through the
solution of the advection-diffusion equation $u(\vec{x}, t)$ given by
\begin{equation}\label{eq:ad}
  \begin{aligned}
    u_t - \kappa\Delta u + \mathbf{v}\cdot\nabla u &= 0 & \quad&\text{in
    }\D\times (0,T), \\
    u(\cdot, 0) &= \ipar  &&\text{in } \D , \\
    \kappa\nabla u\cdot \vec{n} &= 0 &&\text{on } \partial\D \times (0,T).
  \end{aligned}
\end{equation}
Here, $\kappa > 0$ is the diffusion coefficient and $T > 0$ is the final
time. In
our numerical experiments, we use $\kappa = 0.001$ for the two-dimensional
problem and $\kappa = 0.003$ for the three-dimensional problem.  The velocity field
$\vec{v}$, shown in Figures~\ref{fig:ad_dom}(b), (c) for the two- and the
three-dimensional problems, respectively, is computed by solving the following steady-state
Navier-Stokes equation with the side walls driving the flow:
\begin{equation}\label{eq:fp:navst}
  \begin{aligned}
    - \frac{1}{\operatorname{Re}} \Delta \vec{v} + \nabla q + \vec{v} \cdot \nabla
    \vec{v} &= 0 &\quad&\text{ in
    }\D,\\
    \nabla \cdot \vec{v} &= 0 &&\text{ in }\D,\\
    \vec{v} &= \vec{g} &&\text{ on } \partial\D.
\end{aligned}
\end{equation}
Here, $q$ is pressure, $\text{Re}$ is the Reynolds number,
set to $\text{Re} = 50$ in the present examples. The Dirichlet boundary data
$\vec{g} \in \R^d$ is given by 
$\vec{g} = \vec{e}_2$ on the left wall of the domain, 
$\vec{g}=-\vec{e}_2$ on the right wall,  and $\vec{g} = \vec{0}$ everywhere else
(see, e.g., Figure~\ref{fig:ad_dom}(a) for $d = 2$). 
\begin{figure}\centering
\begin{tabular}{ccc}
\includegraphics[width=0.30\textwidth]{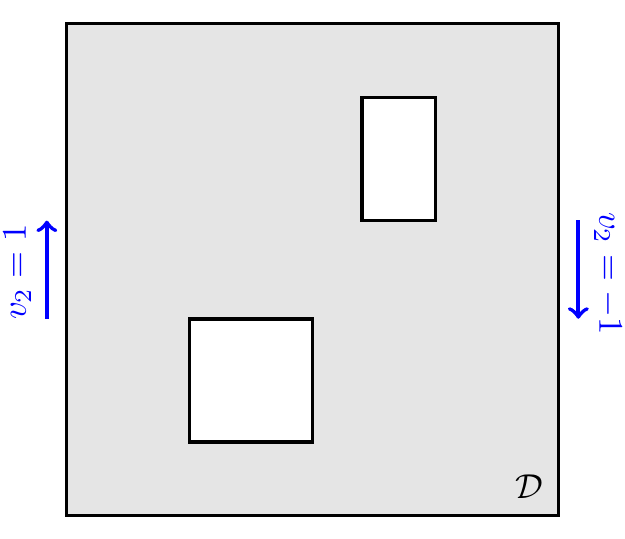}&
\includegraphics[width=.25\textwidth]{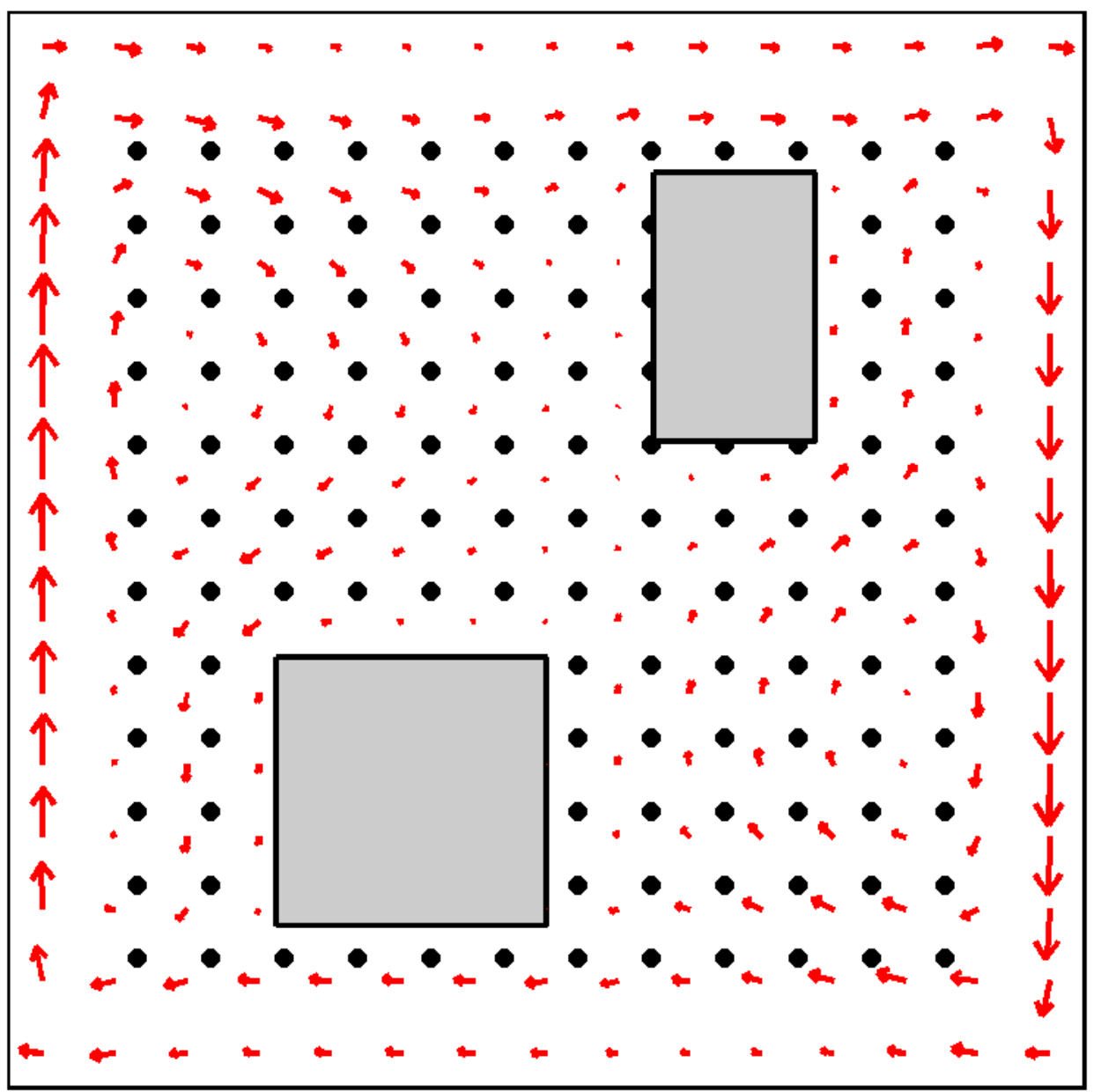} &
\includegraphics[width=.29\textwidth]{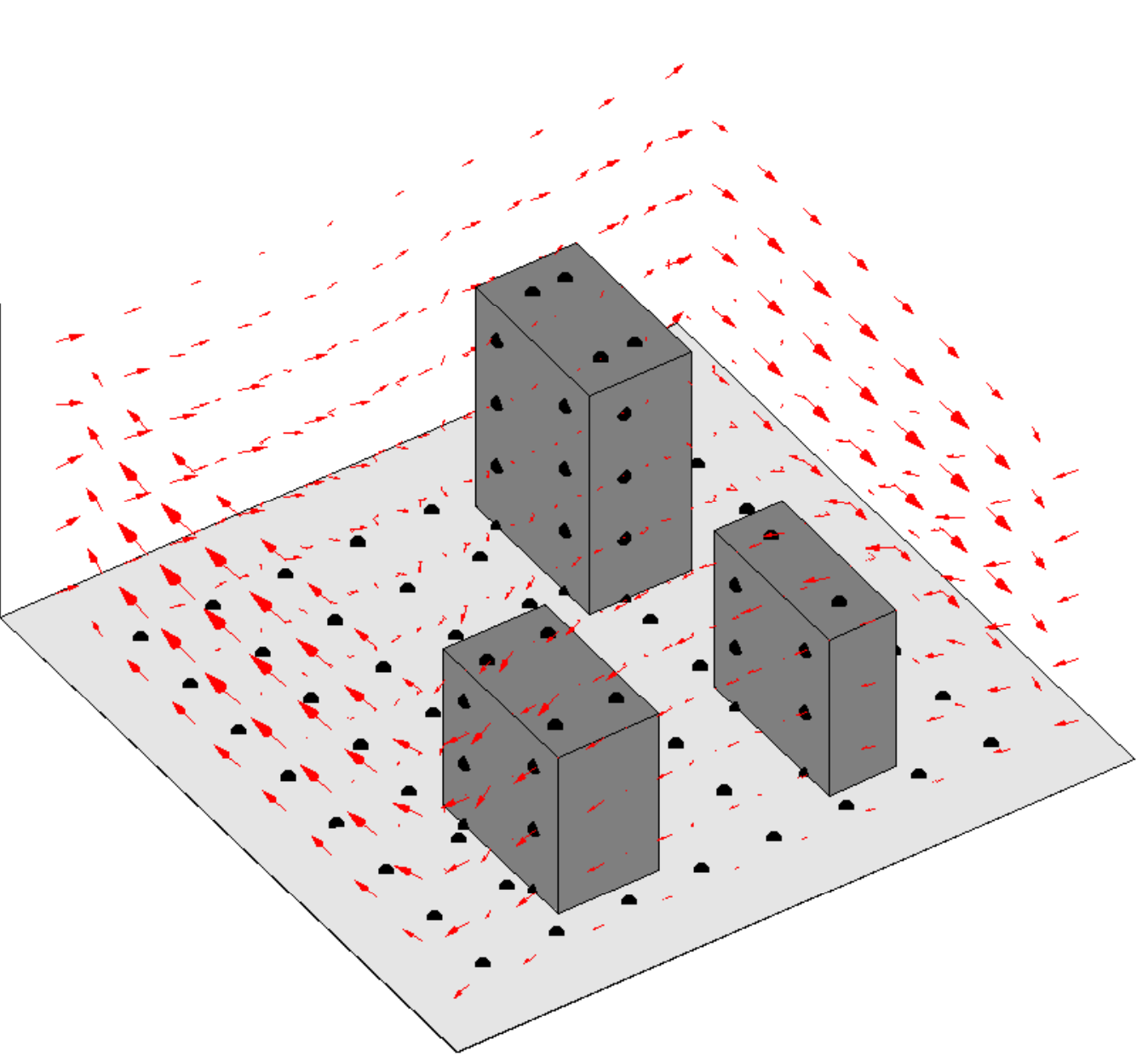}\\
(a) & (b) & (c)
\end{tabular}
\caption{(a) The computational domain $\D$ for the two-dimensional model problem is $[0,1]^2$ with two rectangular regions (representing buildings) removed.
(b) The velocity field $\vv$ (red arrows) and the candidate sensor locations (black dots) for the two-dimensional problem.
(c) The computational domain $\D$ for the three-dimensional model problem is $[0,1]^2 \times [0, 0.5]$ with 
the buildings (gray blocks) removed; shown are also the velocity field (red arrows) and the 
candidate sensor locations (black dots).}
\label{fig:ad_dom}
\end{figure}
To illustrate the physics of the forward problem for a given initial
condition, we show snapshots of the time evolution of the state
variable $u$ in Figure~\ref{fig:state3d}; for snapshots of the time
evolution for the two-dimensional model problem we refer to~\cite{PetraStadler11}.
\begin{figure}[ht]\centering
\includegraphics[width=.33\textwidth]{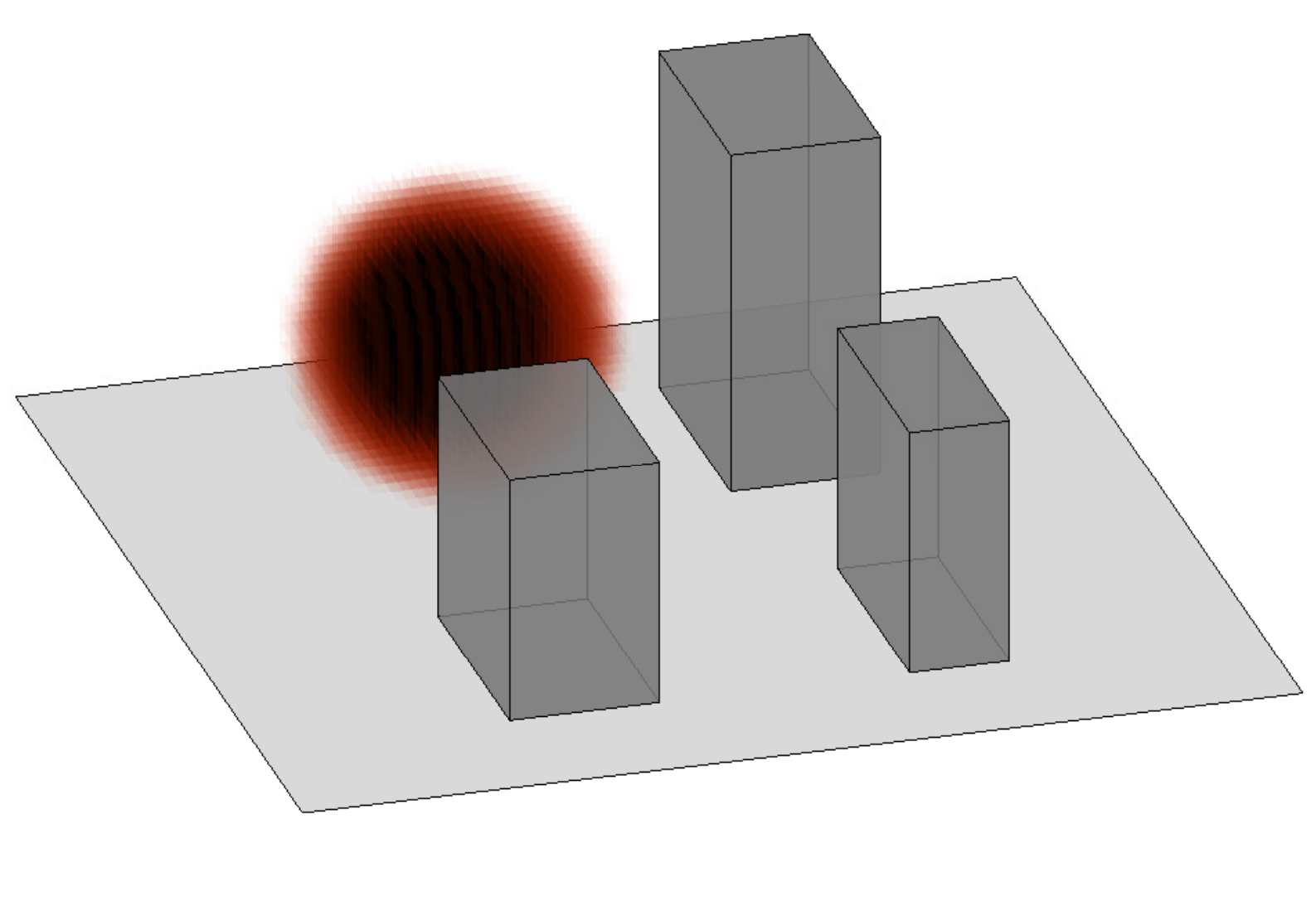}\hfill%
\includegraphics[width=.33\textwidth]{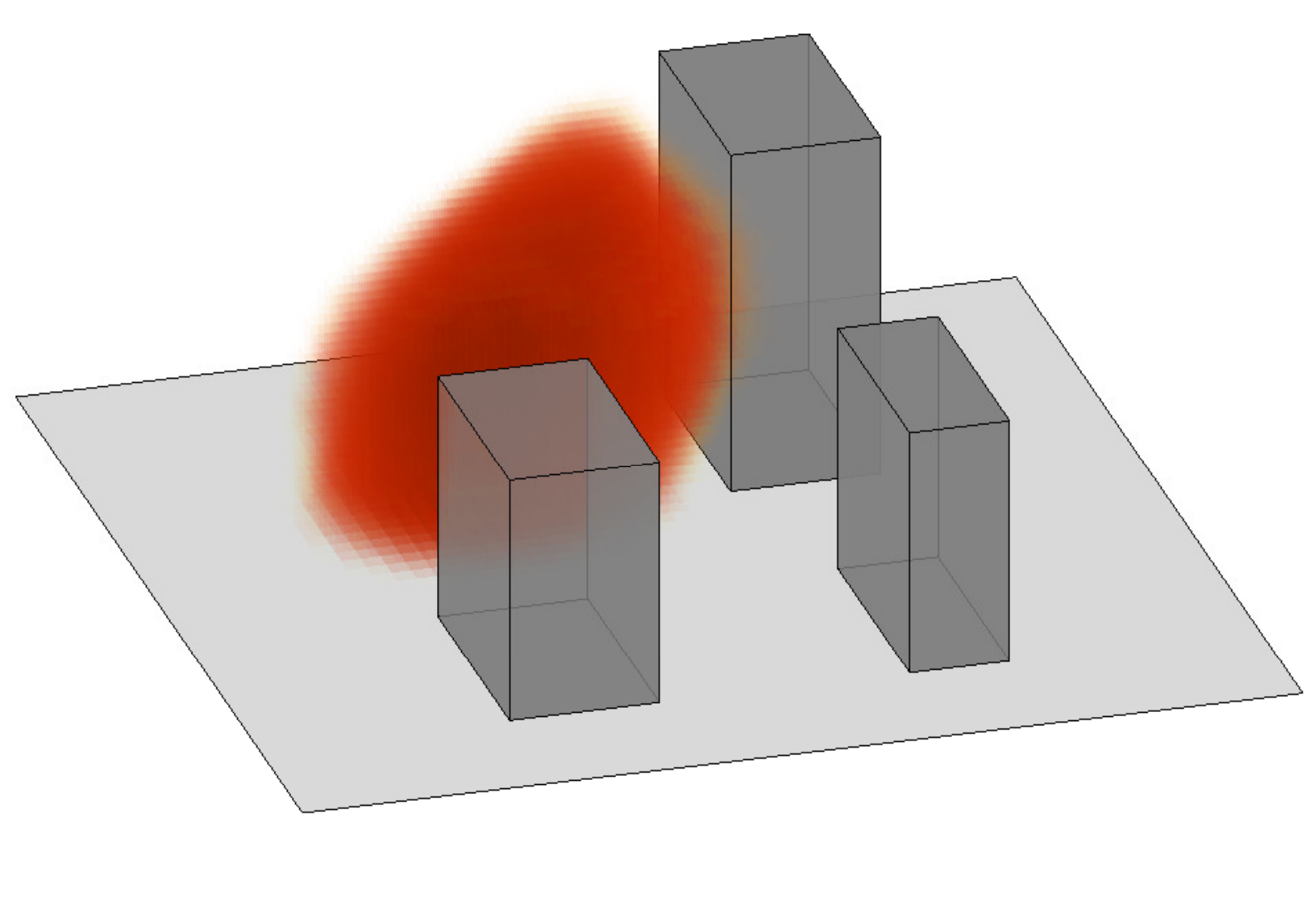}\hfill%
\includegraphics[width=.33\textwidth]{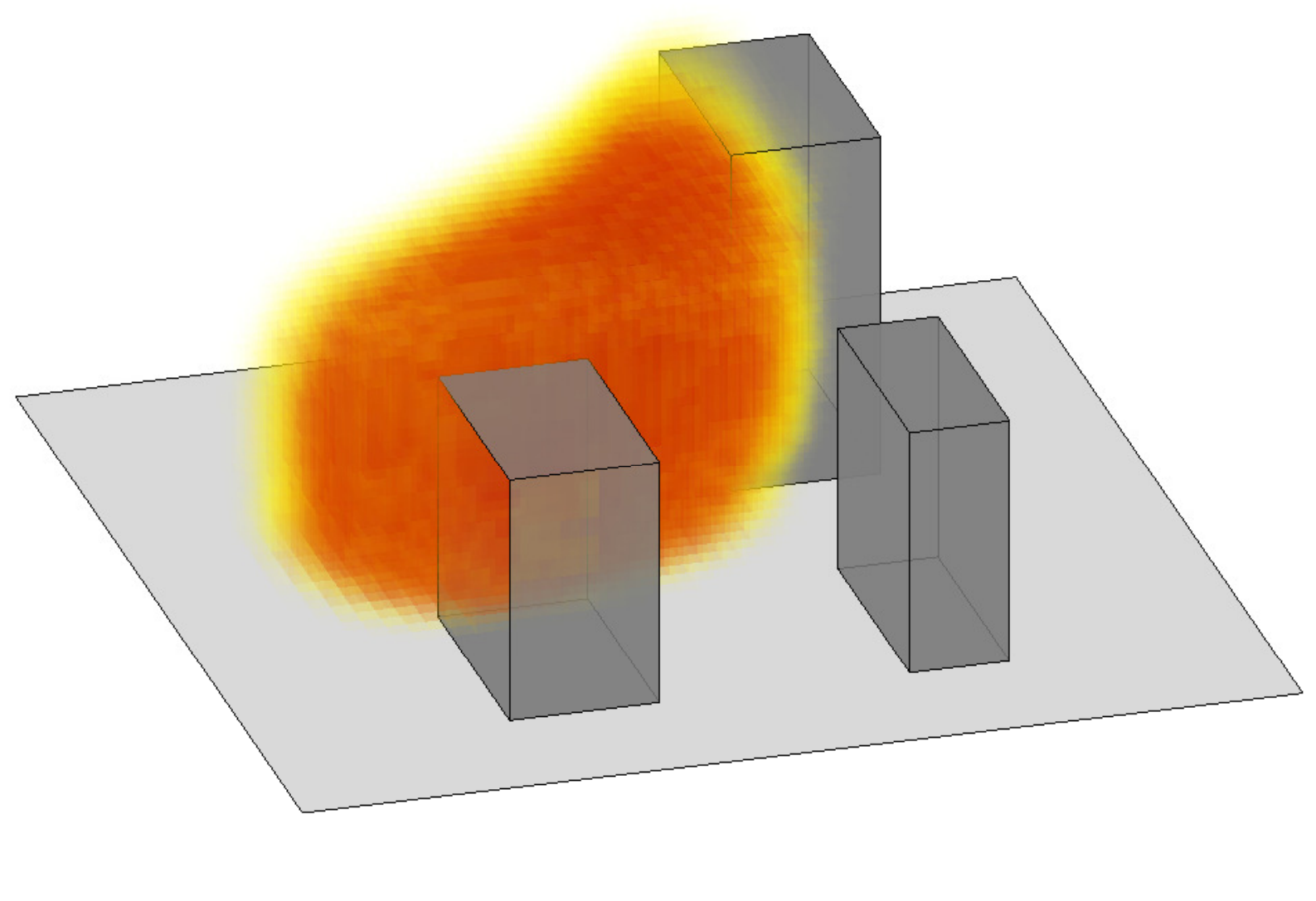}
\caption{Volume renderings of the time evolution of $u$ for the three-dimensional
  model problem. The initial condition $\ipar$ is shown in the left,
  and the middle and right images correspond to snapshots at times $t
  = 1$ and $t=2$.}
\label{fig:state3d}
\end{figure}

The observation operator $\mathcal{B}$ then extracts the values of $u$ on a set of
sensor locations $\{\vec{x}_1, \ldots, \vec{x}_\Ns\} \subset \D$ at
times $\{\tau_1, \ldots, \tau_\Ntau\} \subset [0, T]$. To summarize,
the infinite-dimensional linear parameter-to-observable map $\iFF$
maps the initial condition $\ipar$ to $q=\Ns\Ntau$ observations as
follows: First, we solve the time-dependent advection-diffusion
equation \eqref{eq:ad}, resulting in the space-time solution
$u=u(m)$. Then, the observation operator extracts point values from
$u$ at the measurement locations and times, i.e., evaluates $\mathcal{B}u$.
The corresponding discrete parameter-to-observable map $\FF$ is
obtained by discretizing $\ipar$ and $u$ using, for instance,
finite elements.

\subsection{The Bayesian inverse problem}

Following the Bayesian setup in Section~\ref{sec:bayes}, we utilize 
a Gaussian prior measure $\priorm = \GM{\ipar_0}{\C_0}$,
with $\C_0=\Acal^{-2}$ where $\Acal$ is an elliptic differential operator as 
described in Section~\ref{sec:bayes}, 
and use an additive
Gaussian noise model. Therefore, the solution of the Bayesian inverse
problem is the posterior measure, $\postm = \GM{\iparpost}{\C_\text{post}}$ with
$\iparpost$ and $\C_\text{post}$ as given in~\eqref{equ:mean-cov}.
The posterior mean $\iparpost$ is characterized as the minimizer of
\[
\begin{aligned}
& \mathcal{J}(\ipar) :=
  \frac{1}{2} \left\| \mathcal{B}u(\ipar) -\obs  \right\|^2_{\Gamma^{-1}_{\mathrm{noise}}}
  + \frac 12 \left\| \Acal(\ipar - \ipar_0 \right)\|^2_{L^2(\D)},
\end{aligned}
\]
which can also be interpreted as the regularized functional to be
minimized in deterministic inversion.  Next, we specify the action of
the adjoint $\iFF^*$ of the parameter-to-observable map. Given an
observation vector $\dd \in \R^q$ as defined
in~\eqref{equ:param-to-obs}, $\iFF^*\dd$ is computed by solving the
\emph{adjoint equation} (see~\cite{AkcelikBirosDraganescuEtAl05,
  FlathWilcoxAkcelikEtAl11, PetraStadler11}) for the adjoint variable
$p = p(\vec{x}, t)$,
\begin{equation}\label{eq:ad:adj}
  \begin{aligned}
    -p_t - \nabla \cdot (p \vec{v}) - \kappa\Delta p  &= -\obsop^* \dd %
&\quad&\text{ in
    }\D\times (0,T),\\
    p(\cdot, T) &= 0 &&\text{ in } \D,  \\
    (\vec{v}p+\kappa\nabla p)\cdot \vec{n} &=  0 &&\text{ on }
    \partial\D\times (0,T),  
  \end{aligned}
\end{equation}
and setting $\iFF^*\dd = p(\cdot, 0)$.  Note that \eqref{eq:ad:adj} is a
final value problem, which has to be solved backwards in time.

\subsection{Discretization and solvers}
The discretization of the forward and adjoint problems uses
linear triangular/tetrahedral continuous Galerkin finite elements
for two/three space dimensions, and the implicit Euler method in time.  The discrete adjoint
equation is obtained as the adjoint of the discretized forward equation, i.e.,
we follow a discretize-then-optimize approach.  Due to the large
diffusion parameter $\kappa$ used, no stabilization of the
advection-diffusion equation is needed.
A factorization of the matrix in each
implicit Euler time step is computed upfront, such that every time
step (of the forward as well as the adjoint equation) only requires
triangular solves.

The OED optimization problems are solved using \MATLAB's built-in
interior-point solver \verb+fmincon+, to which we supply the OED
objective function evaluation and its derivative, which we compute
using the methods presented above. The optimization algorithm
approximates second derivatives using the
Broyden-Fletcher-Goldfarb-Shanno (BFGS) \cite{NocedalWright06} method.

\section{Numerical results}
\label{sec:numerics}
In this section, we summarize our numerical experiments.  In
Section~\ref{sec:numerics-2d}, we provide a detailed numerical study
of our method and compare different sparsifications for the
two-dimensional model problem. Then, in Section~\ref{sec:numerics-3d},
we use the three-dimensional model problem to test the scalability and
performance of our OED method.

\subsection{The two-dimensional model problem}\label{sec:numerics-2d}
This section contains numerical experiments for the two-dimensional model problem.
We use $\Ns = 122$ candidate locations for placing sensors as shown
in Figure~\ref{fig:ad_dom}(b), and use $1864$ triangles to discretize
the domain. Linear finite elements are used for the parameter $\ipar$
and the state variable $u$ resulting in $\Nm = 1012$ spatial degrees
of freedom (and thus inversion parameters).  The final time is $T=4$
and the time interval is discretized using $\Nt+1$ implicit Euler time
steps; unless otherwise specified, $\Nt = 64$. Observations at the
sensor locations are taken at $\Ntau = 19$ equally spaced points in
the time interval $[1, 4]$.
The prior covariance is specified as described in
Section~\ref{sec:bayes} with $\alpha = 8\times 10^{-3}$ and $\beta =
10^{-2}$; these values result in a Gaussian prior that is not overly
restrictive or smoothing.  Next, we study the low-rank
approximations of the forward operator and the misfit Hessian.

\subsubsection{Low-rank approximation of $\FF$, $\FT$ and $\HM$}
\label{subsec:lowrankF}
We use a randomized SVD to compute a low-rank surrogate for
the parameter-to-observable 
map $\FF:\R^\Nm \to \R^q$ and its prior-preconditioned counterpart $\FT$,
where $q = \Ns\times \Ntau$. This directly leads to a 
low-rank approximation of the prior-preconditioned misfit
Hessian, $\HT(\vec w) = \FT^*\WW\FT$, where the diagonal weight matrix $\WW$ depends
on the design. Since the diagonal entries of $\WW$
may vanish, the numerical rank of $\HT(\vec w)$ is less than or equal
than that of $\FT^*\FT$, which corresponds to assigning unit weights to
all sensors; thus, it suffices to examine the accuracy of the
approximation of $\HT = \FT^*\FT$ only.

First, we study the low-rank approximation of the (prior-preconditioned) 
parameter-to-observable map.  In
Figure~\ref{fig:decomp} (left), we show the normalized singular values
of $\FT$ and of $\F$. The eigenvalues decay rapidly and thus both
mappings can be accurately approximated by low-rank matrices.
Note the faster decay of the singular values of
$\FT$ compared to those of $\F$, which shows that due to the
preconditioning with $\prcov^{1/2}$, $\FT$ can be approximated more
efficiently with a low-rank operator than~$\F$.
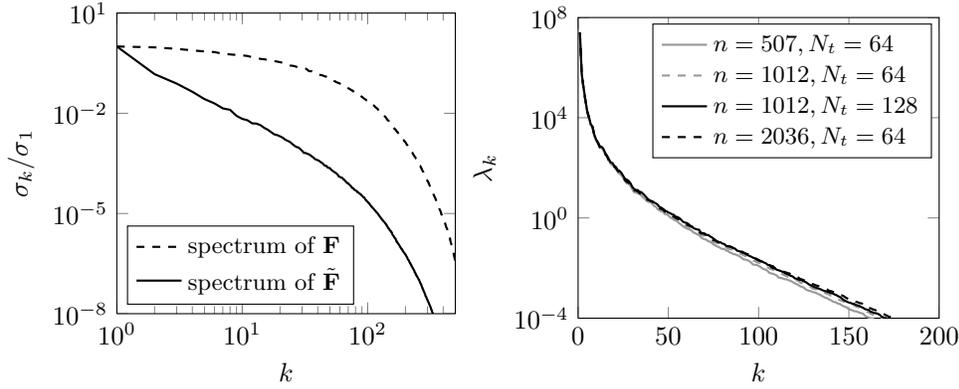
\begin{figure}[ht]\centering
\hspace{-2mm}
\begin{tikzpicture}[]
\begin{loglogaxis}[compat=newest, width=4.5cm, height=4cm, scale only axis,
    xlabel=$k$, ylabel=$\sigma_k / \sigma_1$, xmin=1, xmax=500,
    ymin=1e-8, ymax=1e1,
    legend style={font=\small}, legend pos=south west]
\addplot[color=black,thick,dashed,mark=none,mark size=.25pt]
table[x=n,y=s] {Fdata.txt};
\addlegendentry{spectrum of $\mathbf{F}$}
\addplot[color=black,mark=none,thick,mark size=.25pt]
table[x=n,y=s] {Ftdata.txt};
\addlegendentry{spectrum of $\tilde{\mathbf{F}}$}
\end{loglogaxis}
\end{tikzpicture}
\begin{tikzpicture}[]
\begin{semilogyaxis}[compat=newest, width=4.8cm, height=4cm, scale only axis,
    xlabel=$k$, ylabel=$\lambda_k$, xmin=0, xmax=200,
    ymin=1e-4, ymax=1e8,
    legend style={font=\small,nodes=right}]
\addplot[color=black!40!white,thick,mark=none,mark size=.15pt]
table[x=n,y=dr4] {eigendata.txt};
\addlegendentry{$n = 507, N_t = 64$}
\addplot[color=black!40!white,thick,dashed,mark=none,mark size=.15pt]
table[x=n,y=dr3] {eigendata.txt};
\addlegendentry{$n = 1012, N_t = 64$}
\addplot[color=black,thick,mark=none,mark size=.15pt]
table[x=n,y=dr3c] {eigendata.txt};
\addlegendentry{$n = 1012, N_t = 128$}
\addplot[color=black, thick, dashed, mark=none,mark size=.15pt]
table[x=n,y=dr25] {eigendata.txt};
\addlegendentry{$n = 2036, N_t = 64$}
\end{semilogyaxis}
\end{tikzpicture}\\[-1ex]
\caption{Shown on the left are the normalized singular values of $\FT$ and $\FF$.
The right plot depicts the spectrum of the prior-preconditioned misfit
Hessian with unit sensor weights $\HT=\FT^*\FT$ for
discretizations with different numbers $n$ of spatial and $\Nt$ of temporal unknowns.
To compute these spectra,
we used a rank $r=500$ approximation $\FT_r$ to diminish
the influence of the low-rank approximation for $\FT$.}
\label{fig:decomp}
\end{figure}

Next, we investigate the influence of spatial and temporal
discretization on the prior-preconditioned misfit Hessian; the right
plot in Figure~\ref{fig:decomp} shows the eigenvalues of $\HT$ for
different spatial and temporal discretizations.  Note that the
spectra lie almost on top of each other. This indicates that
the discretizations $\HT$ converge to their
infinite-dimensional counterpart,
$\mathcal{\tilde{H}}_{\text{misfit}}:L^2(\D) \to L^2(\D)$ and that all
discretizations studied in Figure~\ref{fig:decomp} resolve the
dominant physics of the problem.

Finally, we study how the number of candidate sensor
locations affects the spectral decay of $\FT$. Increasing the number
of sensors has the potential to increases the information in the data
and thus the numerical rank of $\FT$.  In
Figure~\ref{fig:svd_sensor_refinement}, we plot the singular values of
$\FT$ for different sensor grids. We observe convergence of the
singular value curves as the sensor grids are refined. This is a
consequence of the correlation of the information that can be gained
from neighboring sensors, which is due to the diffusion term
in \eqref{eq:ad}.

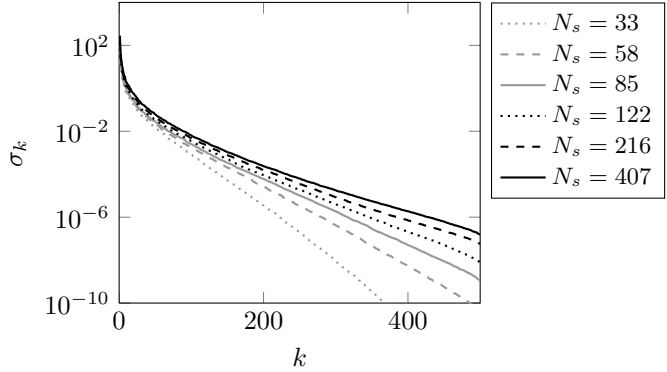
\begin{figure}[ht]\centering
{
\begin{tikzpicture}[]
\begin{semilogyaxis}[compat=newest,width=4.8cm, height=4cm, scale only axis,
    xlabel=$k$, ylabel=$\sigma_k$, xmin=0, xmax=500,
    ymin=1e-10, ymax=1e4,
    legend style={font=\small,nodes=right},legend pos=outer north east]
\addplot[color=black!40!white, thick, dotted]
table[x=idx,y=s] {svds_model_1.txt};
\addlegendentry{$\Ns = 33$}
\addplot[color=black!40!white, thick, dashed]
table[x=idx,y=s] {svds_model_2.txt};
\addlegendentry{$\Ns = 58$}
\addplot[color=black!40!white, thick]
table[x=idx,y=s] {svds_model_3.txt};
\addlegendentry{$\Ns = 85$}
\addplot[color=black!100!white, thick, dotted]
table[x=idx,y=s] {svds_model_4.txt};
\addlegendentry{$\Ns = 122$}
\addplot[color=black!100!white, thick,dashed]
table[x=idx,y=s] {svds_model_6.txt};
\addlegendentry{$\Ns = 216$}
\addplot[color=black!100!white, thick]
table[x=idx,y=s] {svds_model_9.txt};
\addlegendentry{$\Ns = 407$}
\end{semilogyaxis}
\end{tikzpicture}\\[-1ex]
}
\caption{Singular values of $\FT$ for different values of $\Ns$.
The results correspond to a discretization of the problem with
parameter dimension $\Nm = 1012$ and $\Nt = 2^6$ time steps.}
\label{fig:svd_sensor_refinement}
\end{figure}

\subsubsection{A-optimal designs with $\ell_1$-sparsification}\label{sec:sparse-oed-2d}
We now apply our method to compute $\ell_1$-sparsified A-optimal
designs.  The weights $\vec w$ found with $\upgamma = 6\times10^1$ are
shown in Figure~\ref{fig:aopt} (left).  As discussed in
Section~\ref{sec:sparsity}, sensors are then placed at locations
corresponding to non-vanishing weights.  To account for numerical
errors introduced by the interior point method used to
solve~\eqref{equ:oed-generic}, non-vanishing weights are defined as
weights $w_i$ that satisfy ${w_i}/{\sum_j w_j} > 4\times10^{-3}$; the
resulting sensor locations are shown in Figure~\ref{fig:aopt} (right).
Next, we study the behavior of our method for computing
$\ell_1$-sparsified designs.

\def \pos {0.45\columnwidth}
\begin{figure}[ht]\centering
\begin{tikzpicture}
  \node (11) at (0, 0.3) {
  \includegraphics[width=\pos]{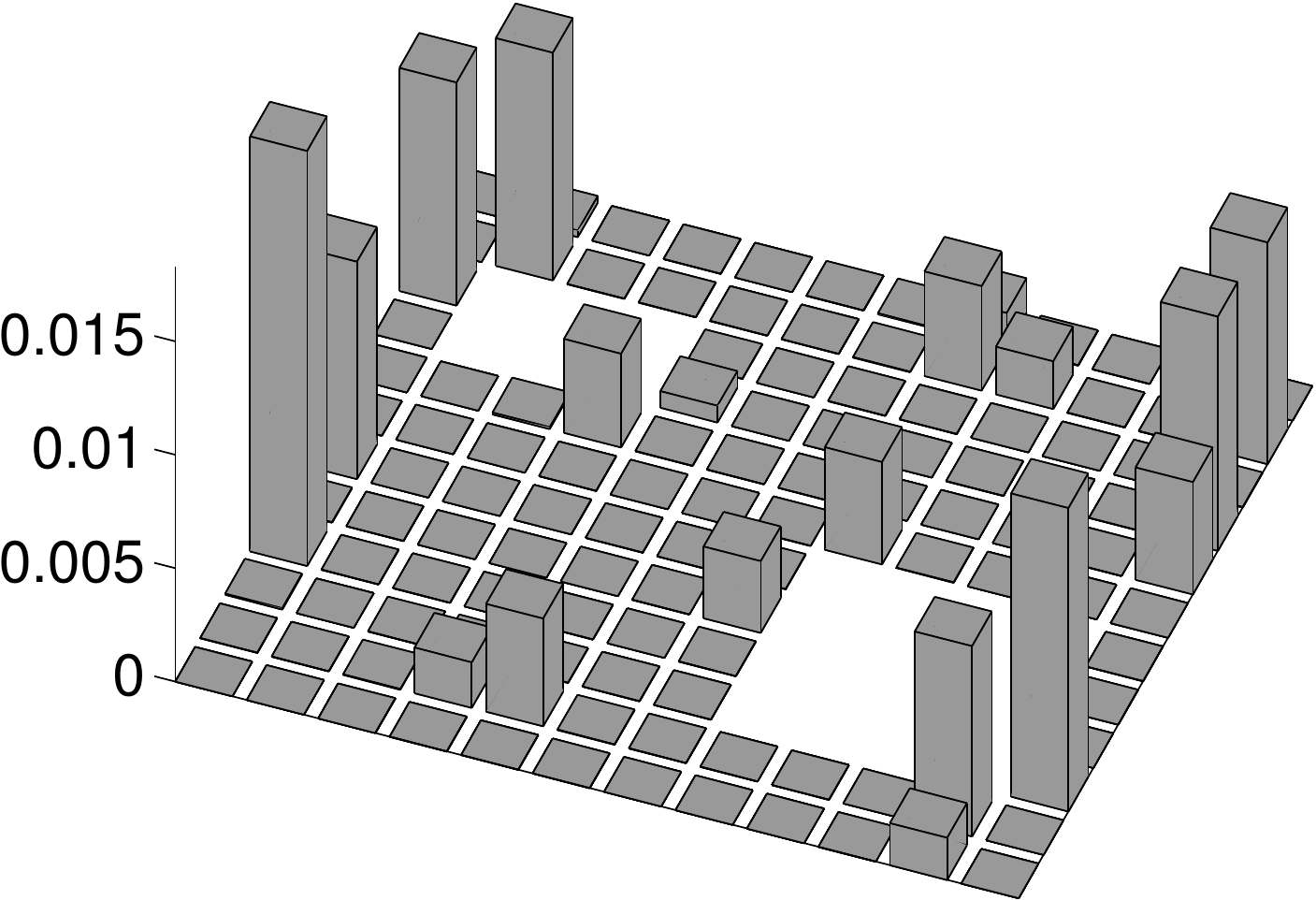}};
  \node (12) at (\pos, 0) {
\includegraphics[width=.3\textwidth]{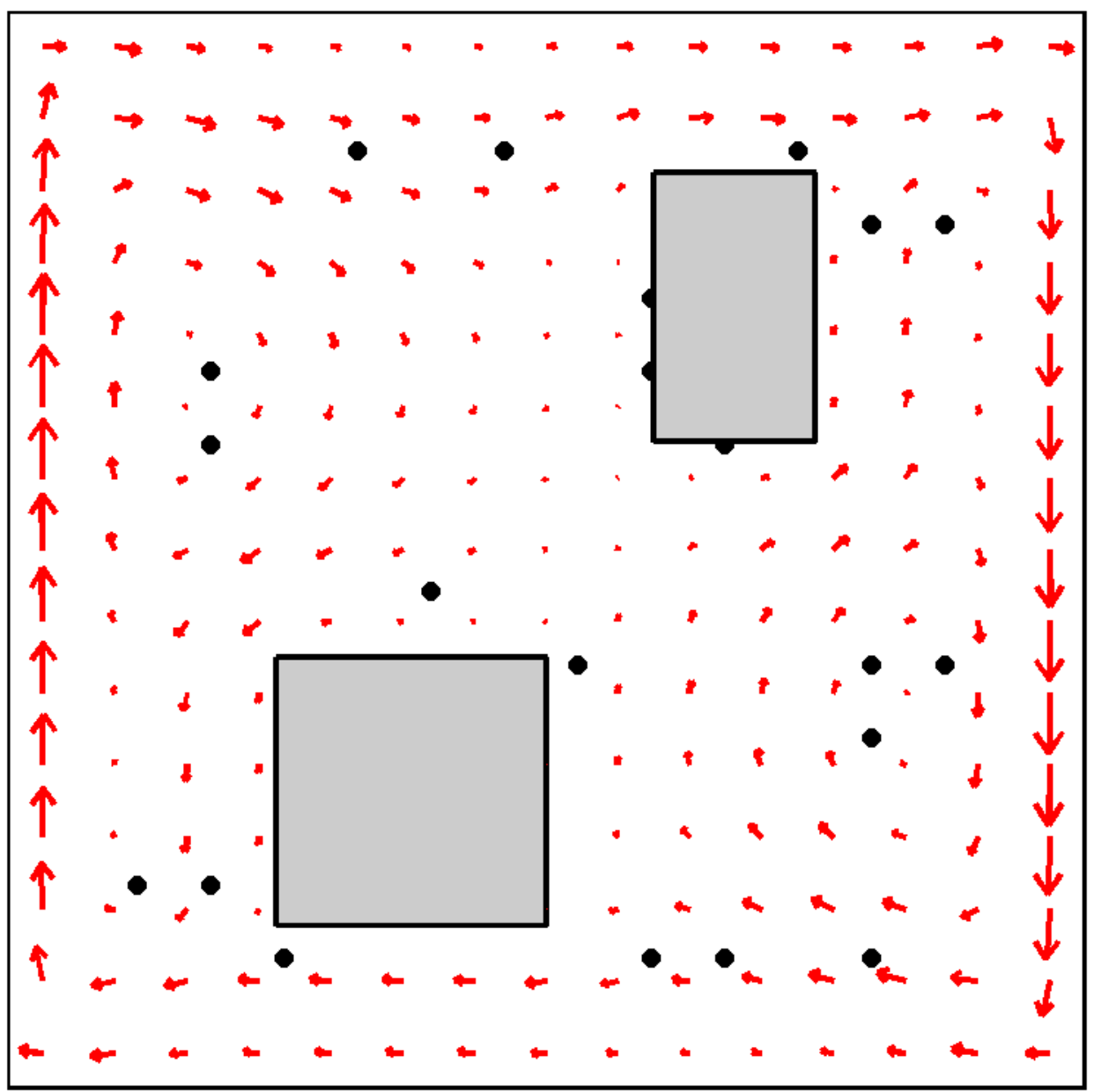}};
 \draw[black,very thick,->] (0.3*\pos,-1.9) -- (0.37*\pos,-1.2);
 \draw[black,very thick,->] (0.3*\pos,-1.9) -- (0.15*\pos,-1.7);
 \draw[black,very thick,->] (0.67*\pos,-1.94) -- (0.82*\pos,-1.94);
 \draw[black,very thick,->] (0.67*\pos,-1.94) -- (0.67*\pos,-1);
 \node at (0.75*\pos,-2.12) {\textcolor{black}{$x$}};
 \node at (0.64*\pos,-1.5) {\textcolor{black}{$y$}};
 \node at (0.36*\pos,-1.7) {\textcolor{black}{$x$}};
 \node at (0.23*\pos,-2.05) {\textcolor{black}{$y$}};
\end{tikzpicture}\\[-1ex]
\caption{Shown on the left are the weights found by solving the
  optimization problem \eqref{equ:oed-generic} with $\ell_1$-sparsification.
  The right image shows the optimal sensor locations (black dots) 
  obtained by placing sensors at locations with non-vanishing 
  weights. Also shown is the advective flow field $\vv$ (red arrows).}
\label{fig:aopt}
\end{figure}

Since we rely on a low-rank surrogate $\FT_r$ for the
prior-preconditioned parameter-to-observable map $\FT$ to avoid
repeated PDE solves in the OED method, we first examine the influence
of this truncation on the optimal design. For that purpose, we fix
$\upgamma=5\times 10^1$ and solve the A-optimal design problem using
approximations $\FT_r$ with $r = 10, 15, 20, 30, 40, 60, 80$.  In
Figure~\ref{fig:rnkstudy}, we plot the optimal objective value
$\Theta(\wopt)$ computed using low-rank surrogates $\FT_r$ with
different $r$.  The convergence of the objective value shows that
very similar results are found for low-rank surrogates $\FT_r$
with $r\ge 40$. This shows that even a significant compression of the
(preconditioned) parameter-to-observable map has little influence on
the optimal design.
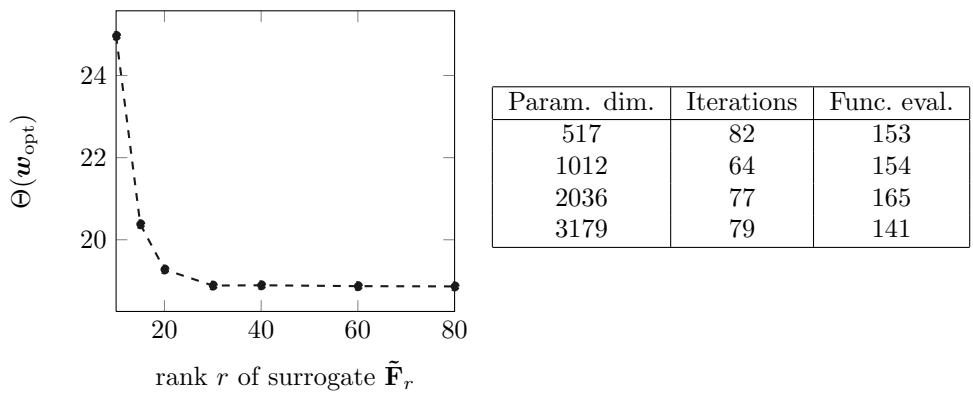
\begin{figure}\centering
\begin{minipage}{.5\textwidth}
  \begin{tikzpicture}
    \begin{axis}[
      width=4.5cm,height=4cm,scale only axis, xmin=10,xmax=80,
      xlabel=rank $r$ of surrogate $\mathbf{\tilde{F}}_r$,
      ylabel=$\Theta(\wopt)$]
    \addplot[color=black!90!white, thick, mark=*,mark size=1.5pt,dashed] coordinates {
   (10.0000,   24.9663)
   (15.0000,   20.3784)
   (20.0000,   19.2772)
   (30.0000,   18.8843)
   (40.0000,   18.8918)
   (60.0000,   18.8703)
   (80.0000,   18.8642)
  (100.0000,   18.8678)};
  \end{axis}
\end{tikzpicture}
\end{minipage}%
\begin{minipage}{.5\textwidth}
  \centering
  \begin{tabular}{|c|c|c|}
    \hline
    Param. dim. & Iterations & Func.~eval. \\
   \hline
     517 & 82 & 153 \\
    1012 & 64 & 154 \\
    2036 & 77 & 165 \\
    3179 & 79 & 141 \\
   \hline
  \end{tabular}
  \vspace{10mm}
\end{minipage}%
\caption{Left: Value of $\Theta(\wopt)$ for an $\ell_1$-sparsified
  A-optimal design obtained with rank-$r$ approximations
  $\FT_r$. Right: Number of interior-point quasi Newton iterations and
  function evaluations for different discretizations (and thus
  different parameter dimensions).}
\label{fig:rnkstudy}
\end{figure}

Next, we study the effect of increasing the parameter dimension
(through mesh refinements) on the number of interior-point iterations
required to solve the OED optimization problem \eqref{equ:oed-est}. As
observed in Section~\ref{subsec:lowrankF}, increasing the parameter
dimension does not increase the numerical rank of $\FT$ once the mesh
is sufficiently fine. This, and the use of a Newton-type optimization
method results in a nearly constant number of interior-point
iterations required to solve \eqref{equ:oed-est}, as shown in
the table in Figure~\ref{fig:rnkstudy}.

Finally, we study how the performance of the numerical optimization encountered in our OED method
is affected by increasing the number $\Ns$ of candidate
sensor locations; note that $\Ns$ is also the dimension of the weight
vector $\vec w$, the unknown in the OED optimization problem
\eqref{equ:oed-est}. The results shown in
Table~\ref{tbl:ip_sensor_refinement} show that the number of
interior-point quasi-Newton iterations and the number of evaluations
of $\Theta(\cdot)$ do not increase significantly as the number of
sensor candidate locations increases. We again attribute this to the
use of a Newton-type method, since Newton's method satisfies---under
certain assumptions---a mesh independence property \cite{Deuflhard04}.

\begin{table}[t!]\centering
\caption{Number of interior-point BFGS iterations and evaluations of $\Theta(\cdot)$
  required for the solution of \eqref{equ:oed-est} with $\ell_1$-sparsification,
  for different numbers $\Ns$ of sensor candidate locations. The
  optimization iteration was terminated when the norm of the
  gradient was decreased by a factor of $10^{4}$.}
\label{tbl:ip_sensor_refinement}
\begin{tabular}{|l|cccccccc|}
\hline
$\Ns$      &  $33$ &  $58$  &  $85$ &  $122$ & $172$ &  $216$ &  $264$ &  $340$\\
\hline
Iterations &  $58$  &  $68$ &   $68$ &   $81$ &   $58$ &   $68$ &   $75$  &  $72$ \\
\hline
Func.~eval. &  $149$ &  $165$ &  $141$ &  $158$ &  $136$ &  $137$ &  $122$ &  $160$ \\
\hline
\end{tabular}
\end{table}

Regarding the complexity of our method with respect to the number $\Ns$
of candidate sensor locations, we draw the following conclusions from
Figure~\ref{fig:svd_sensor_refinement} and
Table~\ref{tbl:ip_sensor_refinement}:
First, the number of PDE solves required to compute the low-rank 
surrogate $\FT_r$ is bounded with respect to $\Ns$, 
and second, the number of optimization iterations to solve
\eqref{equ:oed-est} is insensitive to the number of candidate sensor
locations, and thus the dimension of the weight vector $\vec w$.

\subsubsection{A-optimal designs with $\Phi_\eps$-sparsification}
Next, we present $\Phi_\eps$-sparsified designs obtained by a
continuation procedure with respect to $\eps$.
Optimal designs for $\upgamma = 0.05$ and $\upgamma=0.06$ are shown on the
left in Figure~\ref{fig:aopt-pe}.  Note that the optimal sensor
locations obtained with $\upgamma=0.06$ is not a subset of the
locations obtained with $\upgamma=0.05$, i.e., the designs are not
nested.

For the continuation procedure, we use the values $\eps_i = (2/3)^i$,
$i = 1, \ldots, 10$.
To elucidate the convergence of the weights to the desired binary 
structure, the weights corresponding to several values of $\eps_i$ are
shown on the right in Figure~\ref{fig:aopt-pe}. Note that the behavior
of the weights is not monotone as $\eps$ is decreased. A possible
explanation for this behavior is that, as $\eps$ decreases, weights at
neighboring locations can get merged into a single weight at a new
location.

\begin{figure}[ht]\centering
\begin{minipage}[t]{.4\columnwidth}
\vspace{0pt}
\includegraphics[width=.85\textwidth]{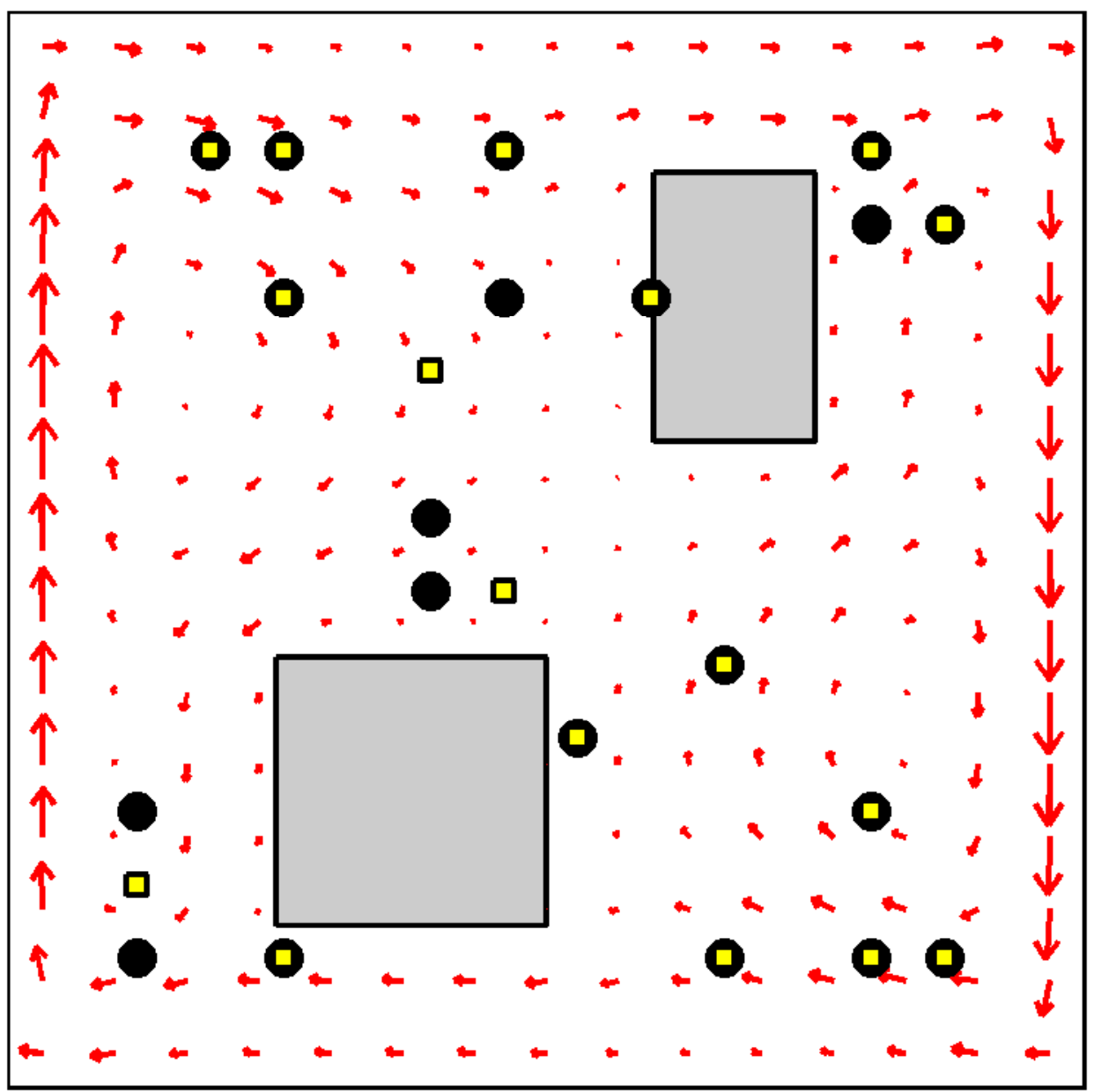}
\end{minipage}
\hspace{-2ex}
\begin{minipage}[t]{.5\columnwidth}
\vspace{0pt}
\begin{tikzpicture}[]
\begin{axis}[compat=1.4, width=4.7cm, height=4.2cm, scale only axis,
    xlabel=$j$, ylabel=$w_{s(j)}$, xmin=1, xmax=122,domain=1:122,
    ymin=0, ymax=1.1,
    legend style={font=\small,nodes=right}, legend pos=south east]
\addplot [color=red, mark=*, mark size=1pt,only marks]
table[x=idx,y=w] {contweights1.txt};
\addlegendentry{$\eps_1$}
\addplot [color=green, mark=*, mark size=1pt,only marks]
table[x=idx,y=w] {contweights2.txt};
\addlegendentry{$\eps_2$}
\addplot [color=blue, mark=*, mark size=1pt,only marks]
table[x=idx,y=w] {contweights3.txt};
\addlegendentry{$\eps_4$}
\addplot [color=black, mark=*, mark size=1pt,only marks]
table[x=idx,y=w] {contweights4.txt};
\addlegendentry{$\eps_{10}$}
\addplot[mark size=0.2,mark=none,dashed]{1};
\draw[dashed,color=black] (41,0)--(41 ,4.3 cm);
\end{axis}
\end{tikzpicture}\hspace{2ex}
\end{minipage}
\vspace{-2ex}
\caption{Left: Sensor locations using $\Phi_\eps$-sparsified A-optimal
  design for $\upgamma = 0.05$ (black dots) and for $\upgamma =
  0.06$ (yellow squares). Right: Convergence of the weights to a 0--1
  structure as $\eps\to 0$.  The weights are ordered based on the
  values obtained with $\eps_1$, and shown is the evolution of the
  weights as $\eps$ decreases.  Here, $s(j)$ denotes the descending
  ordering obtained with $\eps_1$.}
\label{fig:aopt-pe}
\end{figure}

To illustrate the decreased variance for A-optimal designs,
in Figure~\ref{fig:variance} we compare the pointwise posterior
standard deviation obtained with
$\upgamma=0.05$
with a uniform and two randomly generated
designs employing the same number of sensors. This comparison shows that the A-optimal
design provides a notable reduction in the standard deviation (and
thus the variance) fields.
Next, we compare the performance of sensor placements
obtained with $\ell_1$- and $\Phi_\eps$-sparsification.
\begin{figure}[ht]\centering
\includegraphics[width=.242\textwidth]{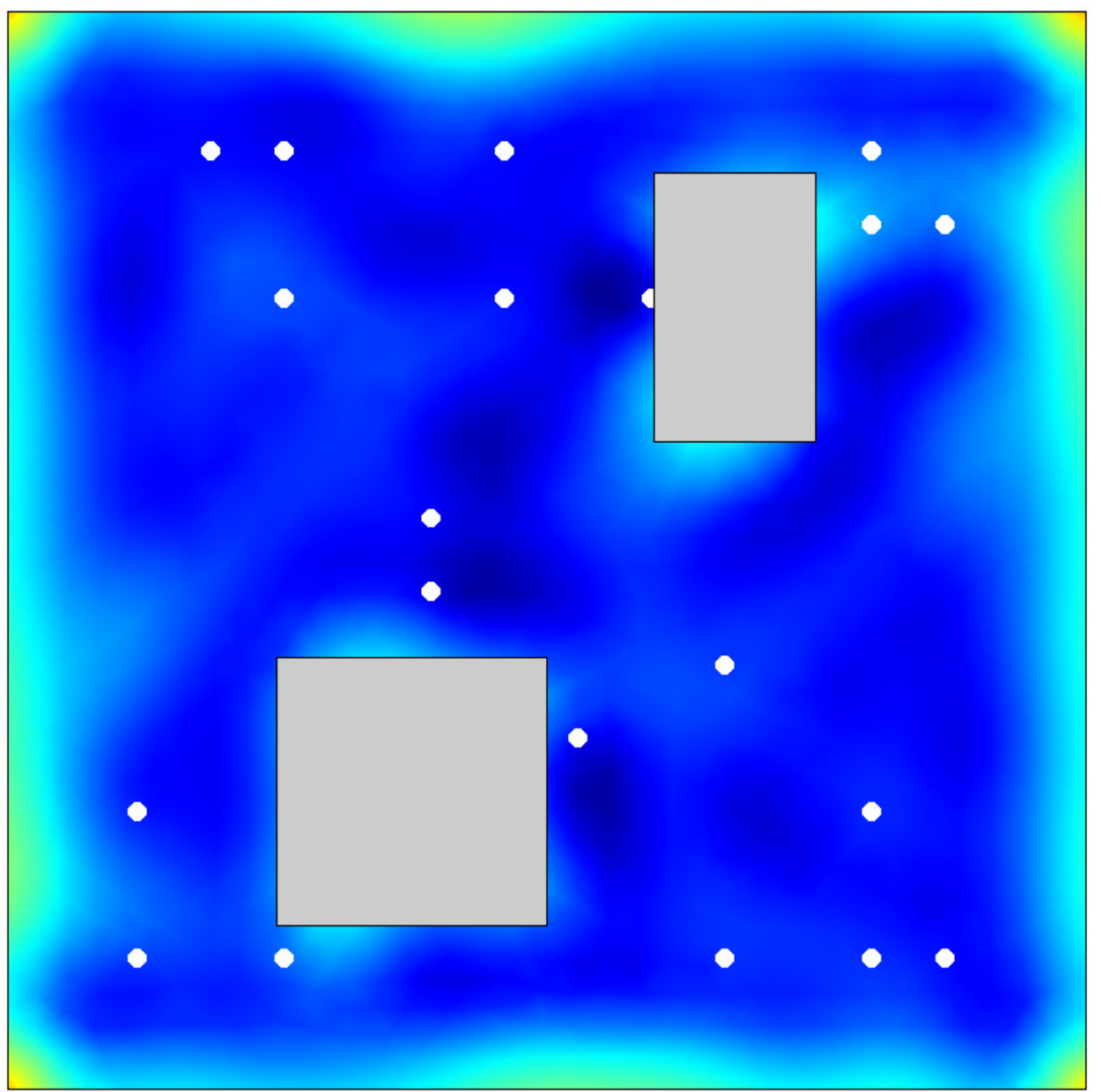} %
\hfill
\includegraphics[width=.242\textwidth]{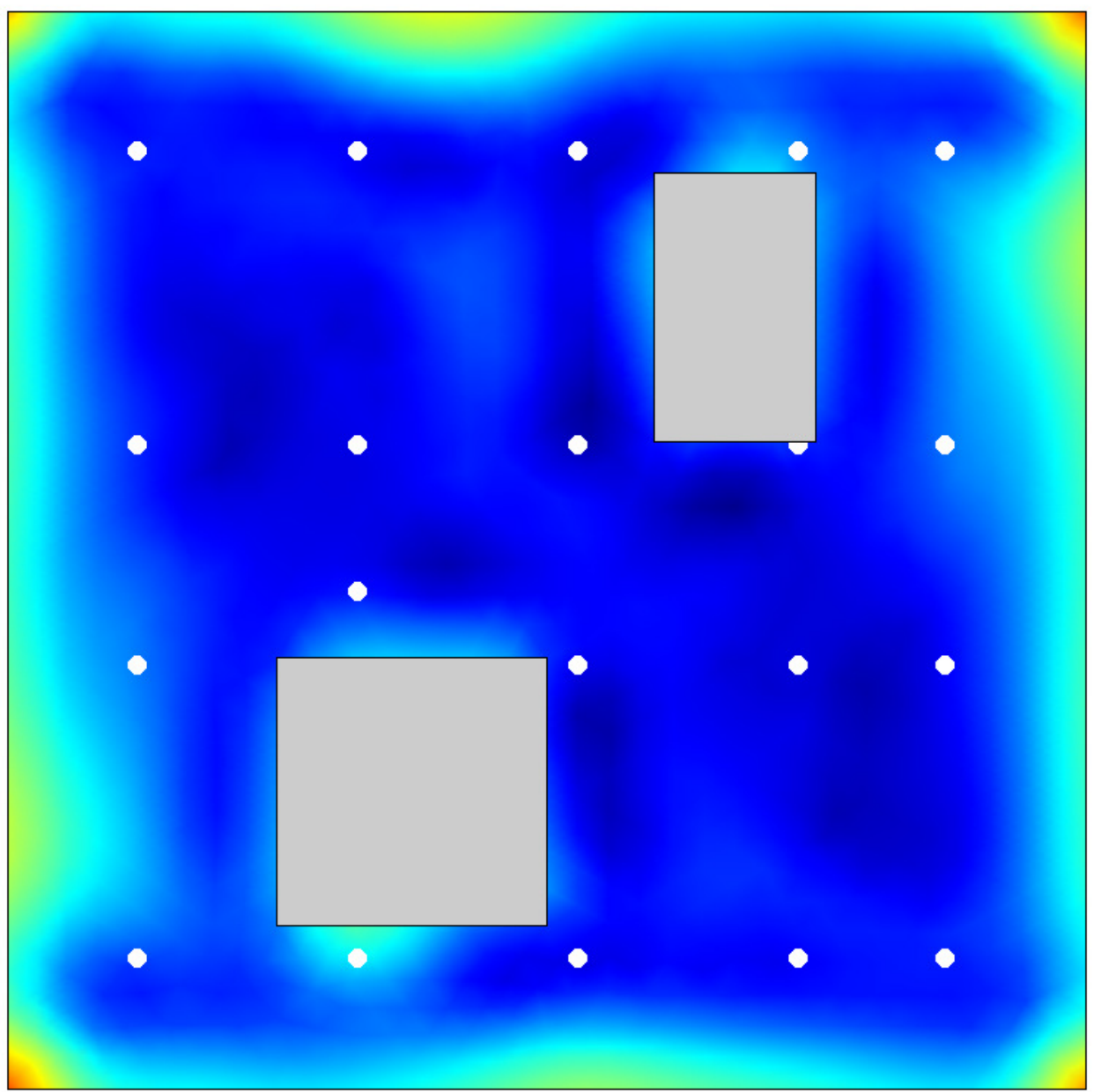} %
\hfill
\includegraphics[width=.242\textwidth]{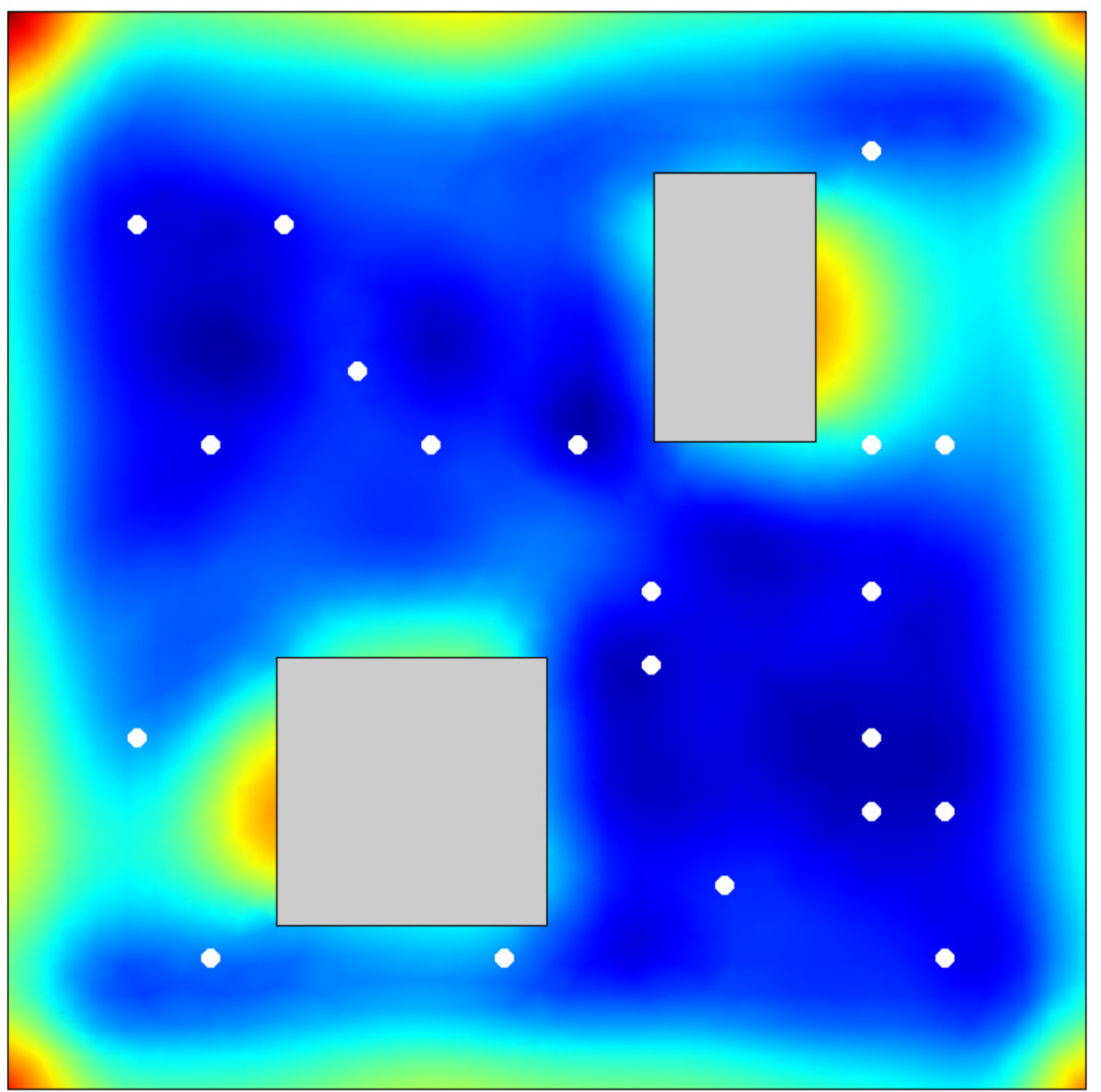} %
\hfill
\includegraphics[width=.242\textwidth]{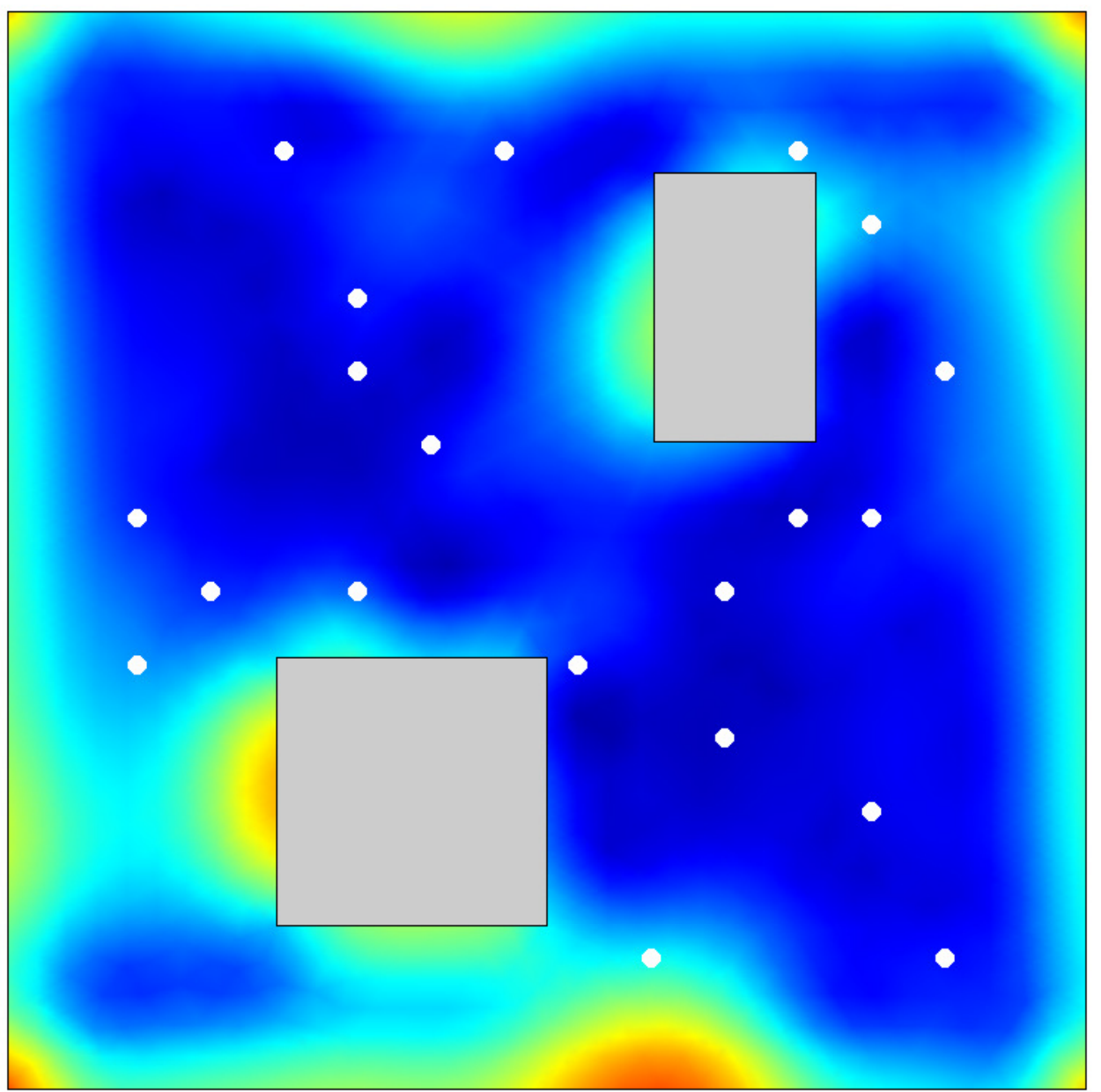}\\[-.5ex]
\hspace{1.2cm} (a) \hfill (b) \hfill (c) \hfill (d) \hspace{1.2cm} \phantom{.} \\[-1ex]
\caption{Shown is the pointwise standard deviation of the posterior
  for a $\Phi_\eps$-sparsified A-optimal design with 20 sensors (a),
  for a manually chosen uniform design (b), and two randomly generated
  designs with the same number of sensors (c), (d).  The white
  dots indicate the sensor locations. Red and blue correspond to regions with large
  and small standard deviation, respectively.
  We find that compared to the optimal design, the uniform design and the two random designs result in
  $7\%$, $36\%$ and $26\%$ increase in the average variance,
  respectively.}
\label{fig:variance}
\end{figure}

\subsubsection{Comparing $\ell_1$- and $\Phi_\eps$-sparsifications}
We observe that $\ell_1$- and $\Phi_\eps$-sparsifications lead to
different designs for the same number of sensors. Thus, a naturally
arising question is (1) \emph{which sparsification results in the
  better placement of sensors?}  Moreover, (2) how much do
\emph{optimal} designs improve over \emph{randomly} chosen designs?
To answer these questions,
we compute optimal designs based on different sparsification
strategies and compare with randomly generated designs. 
We use trace estimators with $100$ Gaussian random vectors chosen
differently in each OED problem, but report the exact value of
$\trace(\matrix{\Gamma}_\text{post}(\vec w)) = \trace(\H(\vec
w)^{-1})$.

Our results are summarized in Figure~\ref{fig:super-plot}.
We compute $\ell_1$-sparsified designs for various values of
$\upgamma$ %
and report the value of $\trace(\H(\vec w)^{-1})$.  Similarly, we
report $\trace(\H(\vec w)^{-1})$ for 0--1 optimal designs computed via
$\Phi_\eps$-sparsification.  Additionally, we compute $\trace(\H(\vec
w)^{-1})$ for a collection of random sensor configurations.  We find
that $\Phi_\eps$-sparsified designs consistently outperform
$\ell_1$-sparsified designs and that both improve
significantly over randomly chosen designs.
Another observation from
Figure~\ref{fig:super-plot} is the diminishing returns as the number
of sensors is increased; using more than $20$ sensors only results in
negligible decrease of the OED objective function value.

\subsubsection{Influence of trace estimation}
To assess the accuracy of randomized trace estimation for a typical
posterior covariance matrix, we compare the exact trace with results
from randomized trace estimation with different numbers of random
vectors. We find that trace estimators based on 1, 5, 10, 20, 100
vectors estimate the exact trace with an average error of about 15\%,
7\%, 5\%, 2\% and 1.5\%, respectively.  This is consistent with the
experiments reported in~\cite{AvronToledo11} showing that
randomized trace estimation is reasonably accurate
with a small number of random vectors, but that many random
vectors may be needed to obtain very accurate approximations.

A more relevant question in the context of OED is how the use of trace
estimators to compute optimal sensor locations influences the
resulting designs. To study this issue, we compute optimal designs
with fixed $\upgamma$, but different trace estimators. The results are
shown in Figure~\ref{fig:trace_sensitivity}. Note that trace
estimation in the OED objective function does have an influence on the
designs. More accurate trace estimation decreases the variation in both
the number of sensors of the designs and the value of the trace of the posterior
covariance.  However, as can be seen from Figure~\ref{fig:super-plot}
optimal designs computed using trace estimation (based on 100 random
vectors) consistently improve over random designs with
respect to reducing the \emph{exact} trace of the posterior covariance.  We
thus conclude that the use of trace estimation does not have a
significant impact on the quality of the optimal designs.

\begin{figure}
\includegraphics[width=.28\textwidth]{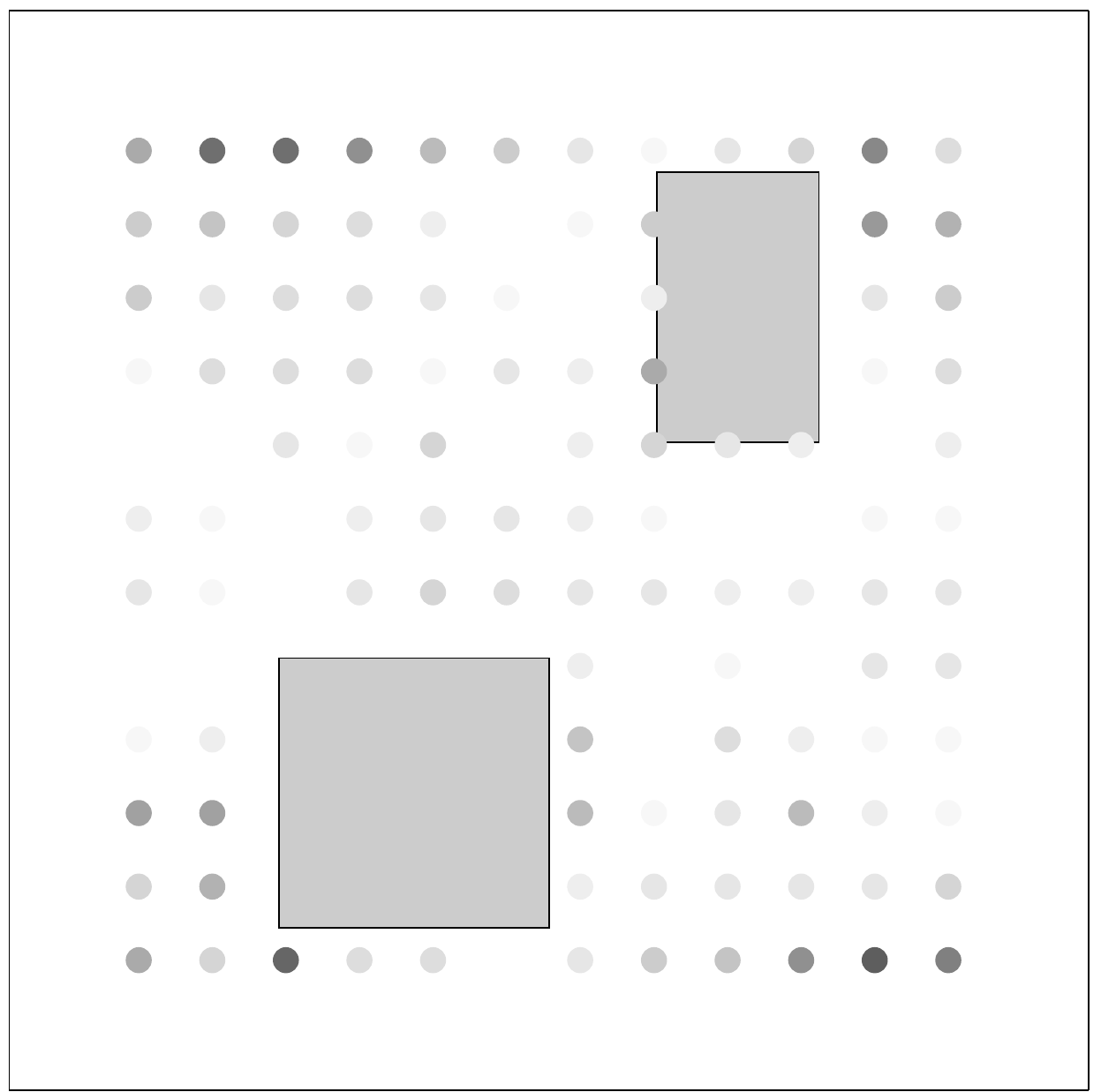}
\hfill
\includegraphics[width=.28\textwidth]{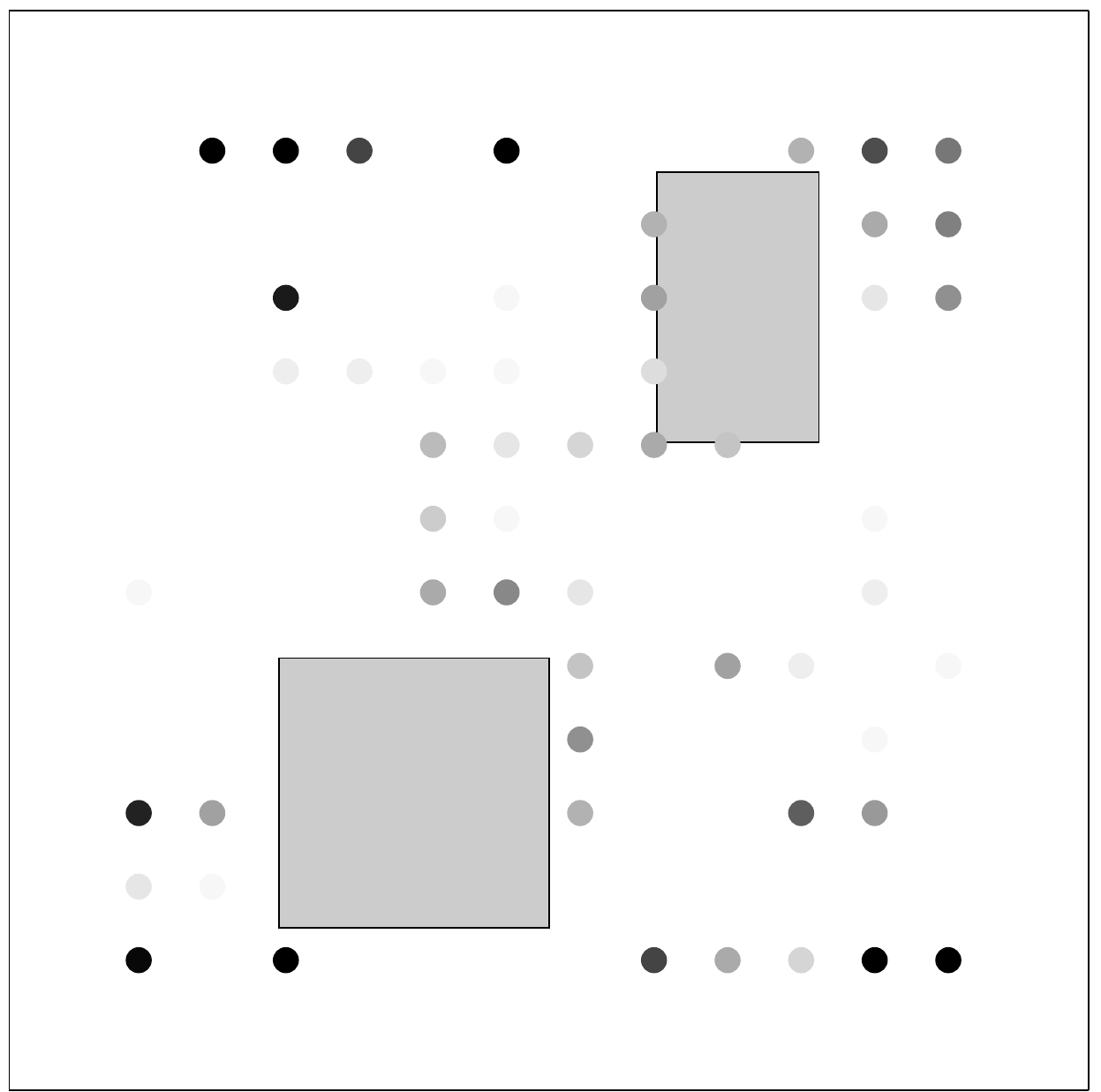}
\hfill
\begin{tikzpicture}[]
\begin{axis}[compat=newest, width=4.2cm, height=2.7cm, scale only axis, xtick={12,14,16,18,20,22}, 
    xlabel={\scriptsize Number of sensors}, ylabel={\scriptsize $\trace(\postcov(\vec w))$},
    xmin=12, xmax=23, ymax=7,
    legend style={font=\scriptsize,nodes=right}]
\addplot [color=black, mark=o, mark size=1.5pt, only marks]
table[x=nnz,y=tr] {freq_data_tr1.txt};
\addlegendentry{$\Ntr = 1$}
\addplot [color=black, mark=*, mark size=1.5pt, only marks]
table[x=nnz,y=tr] {freq_data_tr100.txt};
\addlegendentry{$\Ntr = 100$}
\end{axis}
\end{tikzpicture}
\caption{Sensitivity of $\Phi_\eps$-sparsified designs with respect to
  the trace estimator with fixed $\upgamma=0.05$. The left and the
  middle image visualize the frequency of a candidate location being
  part of the optimal design with a trace estimator based on a single
  random vector (left) and 100 random vectors (middle). The results
  are based on 30 different realizations of the respective trace
  estimators. The darker the candidate location, the more often the
  corresponding sensor was part of the optimal design. Note the
  decreased variation in the designs in the middle plot, which is due
  to the increased accuracy of the trace estimator. The right plot
  depicts the number of sensors and the value of the trace of the
  posterior covariance for 30 different trace estimators with a single
  random vector (empty dots) and with 100 random vectors (filled
  dots). Note that the number of sensors varies for the different
  realizations and that the more accurate trace estimator results in less
  scattering with respect to the number of sensors and the value of
  the trace. \label{fig:trace_sensitivity}}
\end{figure}

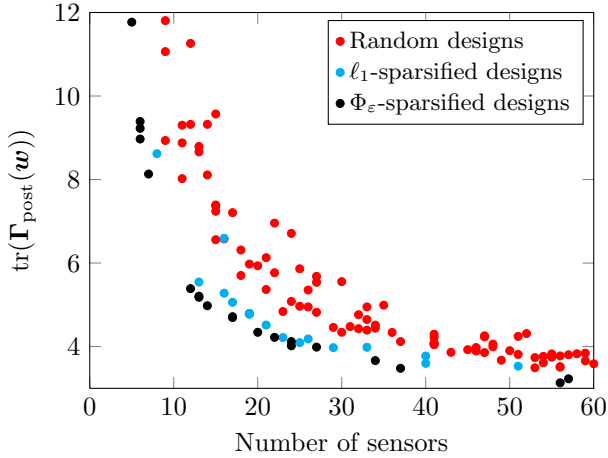
\begin{figure}[ht]\centering
\hspace{-1.1cm}
\begin{tikzpicture}[]
\begin{axis}[compat=newest, width=6.7cm, height=5cm, scale only axis,
    xlabel=Number of sensors, ylabel=tr($\mathbf{\Gamma}_\text{post}({\mbox{\boldmath$\displaystyle{w}$}})$), xmin=0, xmax=60,
    ymin=3, ymax=12,
    legend style={font=\small,nodes=right}]
\addplot [color=red, mark=*, mark size=1.5pt, only marks]
table[x=nnz,y=trace] {randpts.txt};
\addlegendentry{Random designs}
\addplot [color=cyan, mark=*, mark size=1.5pt, only marks]
table[x=nnz,y=trace] {l1pts.txt};
\addlegendentry{$\ell_1$-sparsified designs}
\addplot [color=black, mark=*, mark size=1.5pt, only marks]
table[x=nnz,y=trace] {binpts.txt};
\addlegendentry{$\Phi_\eps$-sparsified designs}
\end{axis}
\end{tikzpicture}
\caption{
  Comparison of performance for different designs.  Shown is
  the value of $\trace(\H^{-1})$ versus the number of sensors for
  random designs (red dots), $\ell_1$-sparsified designs (cyan
  dots) and $\Phi_\eps$-sparsified designs (black
  dots). 
}
\label{fig:super-plot}
\end{figure}

\subsection{The three-dimensional model problem}
\label{sec:numerics-3d}
The main target of this three dimensional model problem is to
study the applicability of our OED method
to large-scale inverse problems.  The
computational domain used is depicted in Figure~\ref{fig:ad_dom}(c), where
$101$ candidate sensor locations are shown as black dots. Note that we
allow sensors on the ground and on the sides of the buildings.
Observations are collected at $\Ntau = 12$ equally spaced points in
the time interval $[1, T]$, where $T=4$.  The parameter and the
state variable $u$ in the advection-diffusion equation are discretized using
linear finite elements on a tetrahedral mesh with $10{,}652$ spatial
degrees of freedom, and $64$ implicit Euler time steps are used for the time
integration.  Thus, the dimension of the inversion parameter %
is $\Nm = 10{,}652$.

\begin{figure}[ht]\centering
\includegraphics[width=.4\textwidth]{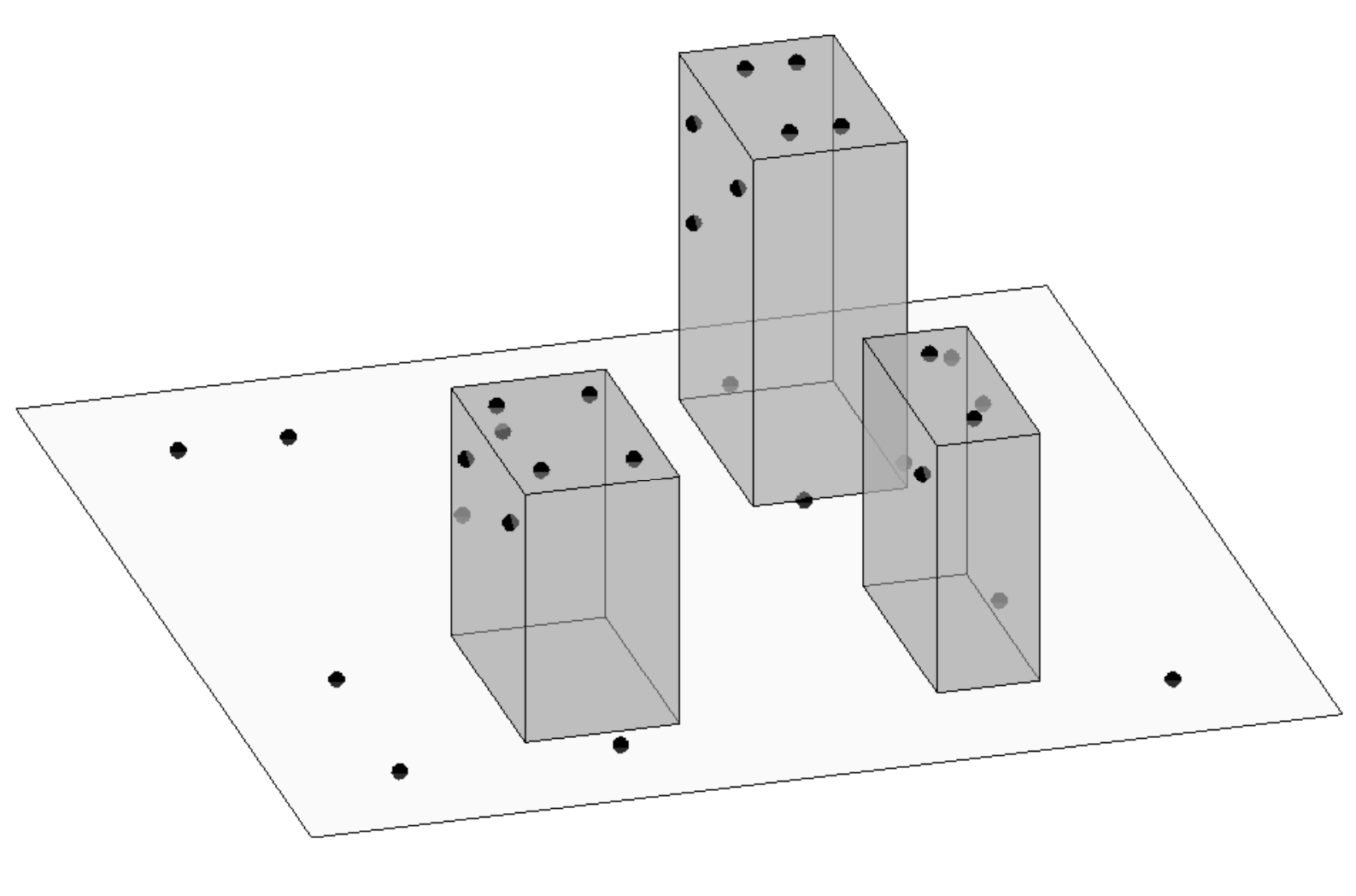}%
\includegraphics[width=.4\textwidth]{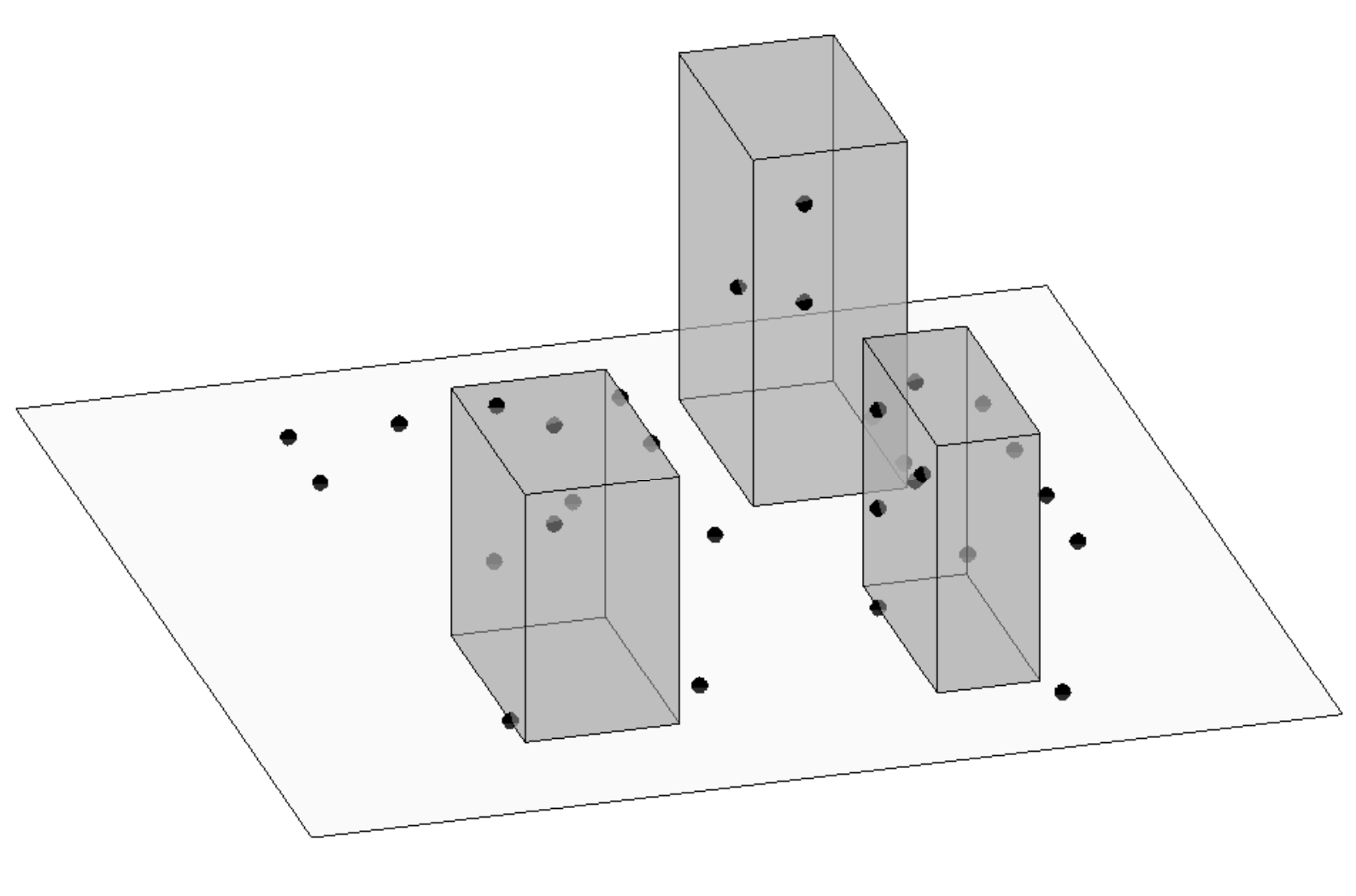}\\
\includegraphics[width=.4\textwidth]{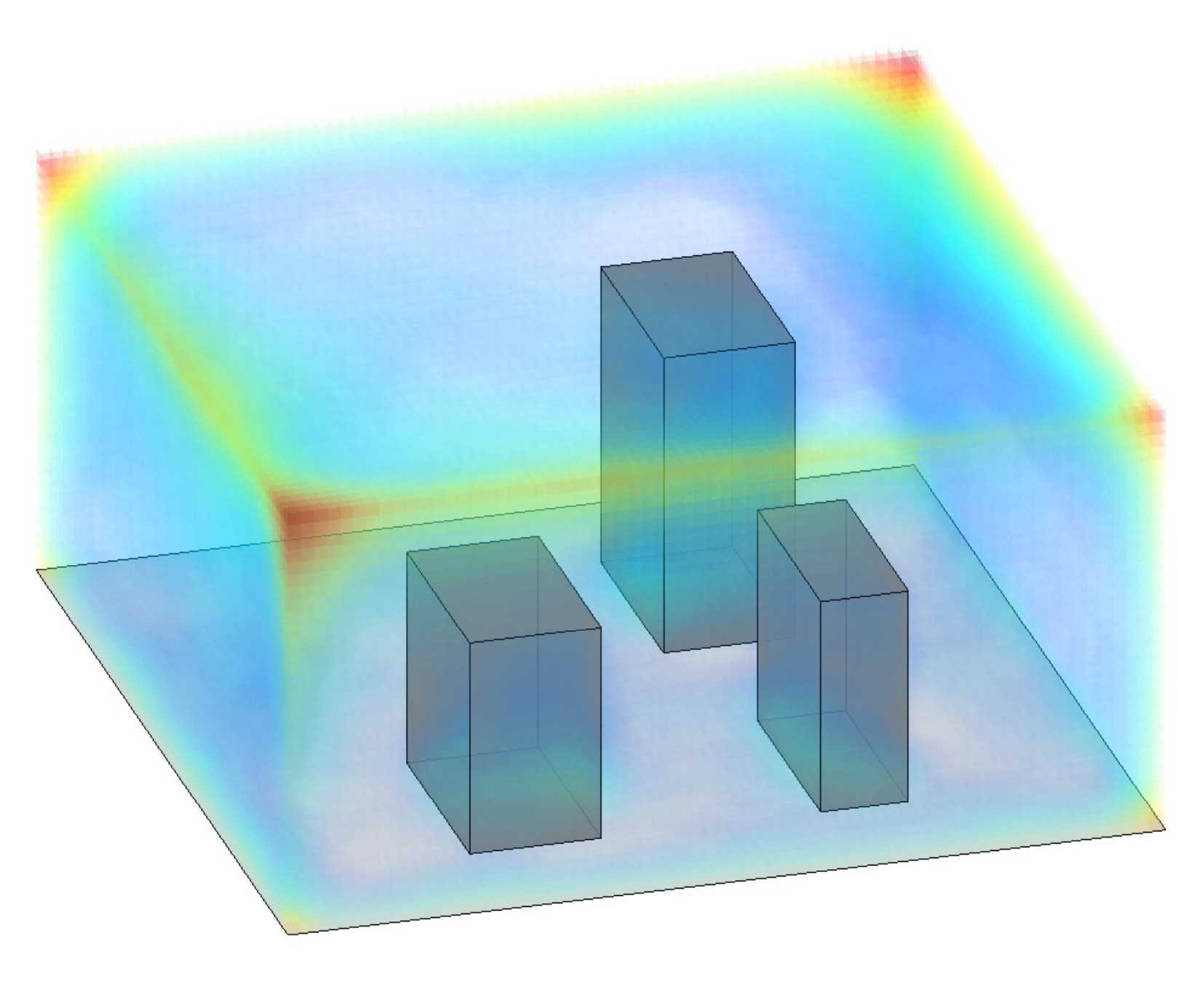}
\includegraphics[width=.4\textwidth]{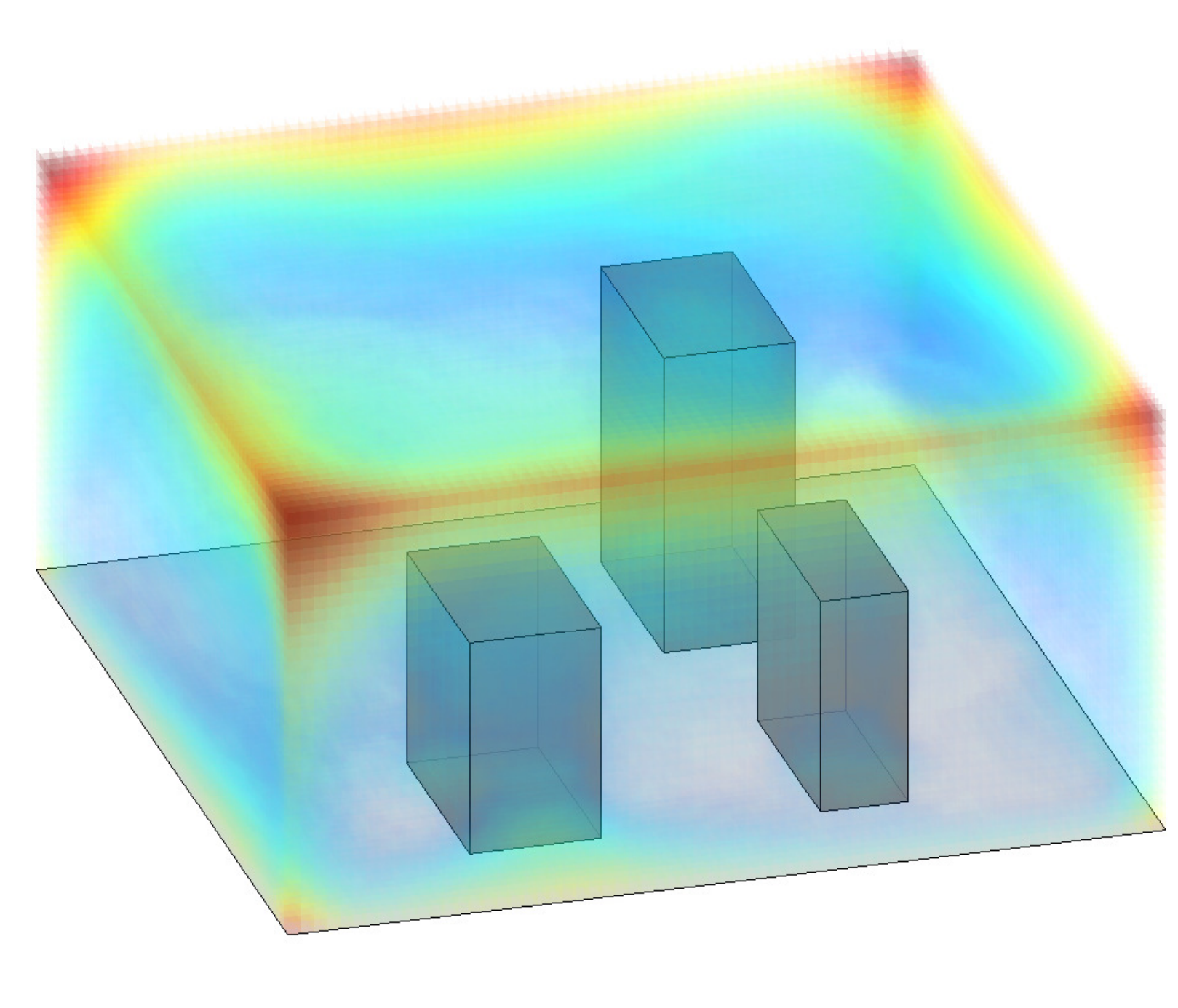}
\vspace{-2ex}
\caption{Top row: Sensor locations (black dots) obtained with A-optimal
  design with $\Phi_\eps$-sparsification (left) and with random design
  (right). Bottom row: Standard deviation fields of the posterior for
  an A-optimal design with $\Phi_\eps$-sparsification (left) and random
  design (right). Here, red indicates regions with large standard
  deviation, and blue corresponds to regions with small standard
  deviation.}
\label{fig:aopt3d}
\end{figure}

In our computations, we found a rank-$100$ SVD surrogate 
for the prior-preconditioned parameter-to-observable map adequate since
the $100$th singular value of $\FT$ is about four
orders of magnitude smaller than the largest singular value.
This translates to eight orders of magnitude difference between the
first and the $100$th eigenvalue of $\HT$.
Due to the large parameter space dimension we do not
compute $\M^{-1/2}$ explicitly, but use the algorithm for the iterative application of
$\M^{-1/2}$ to vectors from \cite{ChenAnitescuSaad11}; see also Section~\ref{sec:aopt-alg}.
We observe that the algorithm is well converged after $10$
applications of $\Mt$---the relative difference between the iterative
application of $\Mt^{-1/2}$ based on $10$ applications of $\Mt$
compared with that based on $500$ applications of $\Mt$ is
approximately $8 \times 10^{-6}$. For the solution of the OED
optimization problem with $\upgamma=1.2\times 10^{-1}$ and
$\Phi_\eps$-sparsification, the continuation procedure presented in
Section~\ref{sec:sparse-oed-2d} is used. To compute the initial
$\ell_1$-sparsified weight vector $\vec{w}^0$, 18 interior-point
iterations were necessary.
This initialization was followed by 10 continuation steps using
the penalties $\Phi_{\eps_i}$, with $\eps_i = (2/3)^i$; 
each step required the solution of an auxiliary OED
problem, which amounted to a total of $554$ interior-point quasi-Newton
iterations to arrive at a 0--1 design vector. The iteration for each
of these auxiliary OED problems was terminated when the relative residual
dropped below $10^{-4}$ or a maximum of 150 iterations was reached.
The number of quasi-Newton iterations in the continuation procedure can
likely be decreased by incorporating the continuation procedure into
the interior-point algorithm rather than solving an independent
interior point problem for each $\eps_i$.

The resulting optimal design is shown in top left image of
Figure~\ref{fig:aopt3d}; note that several sensors are placed on the top
of buildings; these sensors help reduce the variance in the upper
parts of the domain.  To illustrate the effectiveness of the optimal
design to reduce the variance, in Figure~\ref{fig:aopt3d} we show
a volume rendering of the pointwise posterior standard deviation
obtained with the optimal design, and compare it with a rendering obtained
with a randomly generated design that has the same number of
sensors.

\section{Concluding remarks}\label{sec:conc}

We have developed a structure-exploiting efficient numerical method
for computing A-optimal experimental designs in infinite-dimensional
Bayesian linear inverse problems governed by PDEs. Numerical
experiments for the inversion of the initial condition in an
advection-diffusion equation indicate that an optimal design can be
computed at a cost, measured in forward PDE solves, that is
independent of the parameter and candidate sensor dimensions.
Moreover, we find that experimental designs obtained with regularized
$\ell_0$-sparsification are consistently superior to designs obtained
with $\ell_1$-sparsification, and significantly improve over uniform
and random designs.

One limitation of the OED method we have presented is that it relies
on linearity of the parameter-to-observable map and assumes Gaussian
prior and noise distributions. However, the methods presented here are
also applicable in situations where a nonlinear
parameter-to-observable map can be well approximated by a
linearization over the set of parameters that have significant
posterior probability. Moreover, using Gaussian noise and prior
distributions is common, particularly in infinite-dimensional
inference problems.  The computational efficiency of our methods
depends on low-rank approximations of the preconditioned
parameter-to-observable map, which rely on properties of the
associated forward and observation operators. As a consequence of
ill-posedness, many parameter-to-observable operators admit such
low-rank approximations. Finally, our relaxation of the design problem
using a continuous weight vector in combination with sparsification
provides just indirect control of the number of sensors through the
value of $\upgamma$. However, this approach makes the combinatorial
OED problem of optimal sensor placement computationally tractable.

Possible extensions of the present work include consideration of (1)
alternative optimal experimental design criteria that are meaningful
in infinite dimensions and (2) nonlinear parameter-to-observable
maps. OED with nonlinear parameter-to-observable maps is particularly
challenging as, in general, the posterior is non-Gaussian, the
(linearization of the) parameter-to-observable map in general depends
on the state, parameter, and data variables and thus its low-rank
approximation cannot be computed a priori, optimal designs might not
be unique even if the sparsifying penalty is convex, and the misfit
Hessian depends on (usually unavailable) observations.

\section*{Acknowledgments}
We would like to thank James Martin for providing us with an
implementation of the algorithm in \cite{ChenAnitescuSaad11},
which was used to compute the application of the
inverse square root of the mass matrix.

\appendix 
\section{The mass-weighted trace estimator}\label{appdx:trace}
The following basic result justifies the form
of the randomized trace estimator for an $\M$-symmetric
matrix as defined in section~\ref{sec:bayes-disc}. The proof given below adapts the arguments
given in~\cite{AvronToledo11} regarding Gaussian trace estimators for
symmetric linear mappings on the standard Euclidean inner product space.

\begin{proposition}
Let $\A$ be an $\M$-symmetric linear mapping on $\R^n_\M$, $n\ge 1$.
Suppose $\vec{y}$ is a random $n$-vector with i.i.d.~$\GM{0}{1}$-entries, and let
$\vec{z} = \M^{-1/2}\vec{y}$.
Then, $T(\A) := \mip{\vec{z}}{\A\vec{z}}$ is an unbiased estimator for $\trace(\A)$; that is,
$\ave{T(\A)} = \trace(\A)$.
\end{proposition}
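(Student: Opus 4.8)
The plan is to reduce the mass-weighted quadratic form to an ordinary Euclidean one and then apply the standard unbiasedness property of Gaussian trace estimators on the Euclidean inner-product space. First I would expand the estimator using the definition of the $\M$-inner product from Section~\ref{sec:bayes-disc}, writing $T(\A) = \mip{\vec{z}}{\A\vec{z}} = \ip{\M\vec{z}}{\A\vec{z}} = \vec{z}^T \M \A \vec{z}$, where symmetry of $\M$ is used. Substituting $\vec{z} = \M^{-1/2}\vec{y}$ and using the identity $\M^{-1/2}\M = \M^{1/2}$ then gives $T(\A) = \vec{y}^T \matrix{C} \vec{y}$, where $\matrix{C} := \M^{1/2}\A\M^{-1/2}$ is the Euclidean similarity transform of $\A$.

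Next I would invoke the elementary fact that for a standard Gaussian vector $\vec{y}$ one has $\ave{\vec{y}\vec{y}^T} = \I$, so that by linearity of trace and expectation
\[
   \ave{T(\A)} = \ave{\vec{y}^T \matrix{C} \vec{y}} = \ave{\trace(\matrix{C}\vec{y}\vec{y}^T)} = \trace\big(\matrix{C}\,\ave{\vec{y}\vec{y}^T}\big) = \trace(\matrix{C}).
\]
Finally, cyclicity of the trace yields $\trace(\matrix{C}) = \trace(\M^{1/2}\A\M^{-1/2}) = \trace(\A)$, which completes the argument. This mirrors the Euclidean computation in~\cite{AvronToledo11}, the only new ingredient being the change of variables induced by the mass matrix.

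There is no serious obstacle here; the whole proof is a change of variables combined with the known Euclidean result. The one point requiring care is the bookkeeping with $\M^{1/2}$ and $\M^{-1/2}$: one must take $\M^{1/2}$ to be the symmetric positive-definite square root so that $\M^{-1/2}$ is symmetric and the identity $\M^{-1/2}\M = \M^{1/2}$ is valid. I would also remark that $\M$-symmetry of $\A$ is in fact not used anywhere in the computation, since only $\trace(\matrix{C})$ enters; the estimator is therefore unbiased for an arbitrary linear operator on $\R^n_\M$, and the symmetry hypothesis in the statement merely reflects the intended application to the $\M$-symmetric Hessian $\H(\vec{w})^{-1}$.
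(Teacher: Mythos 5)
Your proof is correct, but it follows a genuinely different route from the paper's. The paper exploits the $\M$-symmetry hypothesis from the start: it takes a spectral decomposition $\A = \V\matrix{\Lambda}\V^*$ with $\M$-orthogonal eigenvectors ($\V^T\M\V = \I$), rewrites the estimator as $\vec{q}^T\matrix{\Lambda}\vec{q}$ with $\vec{q} = \V^T\M^{1/2}\vec{y}$, checks that $\vec{q} \sim \GM{\vec{0}}{\I}$, and concludes via $\ave{q_i^2} = 1$ (each $q_i^2$ being $\chi^2$ with one degree of freedom). You instead conjugate the whole problem back to the Euclidean setting, writing $T(\A) = \vec{y}^T\matrix{C}\vec{y}$ with $\matrix{C} = \M^{1/2}\A\M^{-1/2}$, and then use only $\ave{\vec{y}\vec{y}^T} = \I$ together with cyclicity of the trace. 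Both arguments hinge on the same bookkeeping point you flag---$\M^{1/2}$ must be the symmetric square root so that $\M^{-1/2}\M = \M^{1/2}$; the paper's manipulation $\vec{z}^T\M\V = \vec{y}^T\M^{1/2}\V$ uses it in exactly the same way. Your observation that $\M$-symmetry of $\A$ is never needed is accurate: unbiasedness of $\ave{\vec{y}^T\matrix{C}\vec{y}} = \trace(\matrix{C})$ holds for any matrix (indeed for any i.i.d.\ zero-mean, unit-variance entries, Gaussian or not), so your argument is both more elementary and more general. What the paper's spectral route buys in exchange is an explicit diagonal representation $\sum_i \lambda_i q_i^2$ of the estimator, which is the natural starting point if one wants to go beyond unbiasedness and bound the estimator's variance or tail behavior in terms of the eigenvalues, in the style of the analysis in~\cite{AvronToledo11} that the proposition adapts.
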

\begin{proof}
Since $\A$ is $\M$-symmetric, it admits a spectral decomposition,
$\A = \V \matrix{\Lambda} \V^*$, with $\V$ a matrix with $\M$-orthogonal
eigenvectors of $\A$ as its columns (i.e., $\V^T \M \V = \I$) 
and $\matrix{\Lambda}$ a diagonal matrix with real eigenvalues $\{\lambda_i\}_{i = 1}^n$ of $\A$ on 
its diagonal. Then,
\[
\begin{aligned}
    \ave{T(\A)} %
    = \ave{ \vec{z}^T \M \V \matrix{\Lambda} \V^* \vec{z} } %
    = \ave{ \vec{y}^T \M^{1/2} \V \matrix{\Lambda} \V^T \M^{1/2} \vec{y} } %
    = \ave{ \vec{q}^T \matrix{\Lambda} \vec{q}},
\end{aligned}
\]
with $\vec{q} = \V^T \M^{1/2}\vec{y}$; note that above we also used $\V^* = \V^T \M$. 
It is straightforward to show that $\vec{q} \sim \GM{\vec{0}}{\I}$.
Therefore, 
\[
    \ave{T(\A)} = \ave{ \vec{q}^T \matrix{\Lambda} \vec{q}} = \sum_{i = 1}^n \lambda_i \ave{q_i^2} = \sum_{i = 1}^n \lambda_i = \trace(\A),
\]
where the penultimate equality follows from the fact that $\ave{q_i^2}
= 1$, which is the case because $q_i^2$ is the square of standard
normal random variable and is thus $\chi^2$ distributed with one degree of
freedom.
\end{proof}

\newcommand{\dpard}{\dparpost^\obs}
\newcommand{\avey}[1]{\mathsf{E}_{\obs|\dpar} \left\{{#1} \right\}}
\newcommand{\avemu}[1]{\mathsf{E}_{\priorm}\left\{{#1} \right\}}
\section{Relation between the trace of posterior covariance and the expected MSE of the posterior mean}\label{appdx:MSE}
Here, we show that
\begin{equation}\label{equ:bayesrisk}
   \int_{\R^n} \int_{\R^q} \norm{\dpard - \dpar}^2_\M \, d\mu_{\obs | \dpar}(\obs) \, d\priorm(\vec{m})
   = \trace(\postcov),
\end{equation}
where, according to our choice of the noise model, $\mu_{\obs | \dpar}
= \GM{\FF \dpar}{\ncov}$ and $\priorm$ is the (discretized) prior
measure over $\R^n_\M$, $\priorm = \GM{\dpar_0}{\prcov}$.
While~\eqref{equ:bayesrisk} is known~\cite{ChalonerVerdinelli95} in the context of 
inference problems in $n$-dimensional Euclidean space, below we include a proof
of this result adjusted to our choice of mass-weighted inner product, which results from a consistent discretization 
of the infinite-dimensional Bayesian inverse problem. We use the 
notation $\dparpost = \dpard$ to make the dependence of the posterior mean to 
data explicit.  For brevity, we write~\eqref{equ:bayesrisk} as
\begin{equation}\label{equ:bayesrisk2} 
   \avemu{\avey{\norm{\dpard - \dpar}^2_\M}}    = \trace(\postcov).
\end{equation}
Suppose $\mu$ is a Gaussian measure with mean $\vec{z}_0$ and covariance $\mat{Q}$
defined on a Hilbert space $V$. 
In the proof of the following result we use the fact that $\int_V \norm{\vec{\xi}}^2 \, d\mu(\vec{\xi}) = \trace(\mat{Q}) + \norm{\vec{z}_0}^2$. 
Moreover if $\vec{z}$ is Gaussian random variable with probability law $\mu$ and $\mat{A}:V \to W$ is a linear transformation with $W$
a Hilbert space, then $\mat{A}\vec{z} + \vec{b}$ is also Gaussian
with law $\GM{\mat{A}\vec{z}_0 + \vec{b}}{\mat{A} \mat{Q} \mat{A}^*}$, 
where $\mat{A}^*$ is the adjoint of $\mat{A}$.

\begin{proposition}
Consider a Bayesian linear inverse problem with a Gaussian prior $\priorm = \GM{\dpar_0}{\prcov}$ on 
$\R^n_\M$ with additive Gaussian noise model as described above. Then, \eqref{equ:bayesrisk2} holds.
\end{proposition}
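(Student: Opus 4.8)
The plan is to exploit that, under the joint law $d\priorm(\dpar)\,d\mu_{\obs|\dpar}(\obs)$, the estimation error $\dpard - \dpar$ is an affine-linear image of the Gaussian pair $(\dpar,\vec{\eta})$, where $\obs = \F\dpar + \vec{\eta}$ with $\vec{\eta}\sim\GM{\vec{0}}{\ncov}$ independent of $\dpar$, and is therefore itself Gaussian on $\R^n_\M$ by the second fact quoted just before the proposition. Once its law is identified as $\GM{\vec{z}_0}{\mat{Q}}$, the first quoted identity $\int_V \norm{\vec{\xi}}^2\,d\mu(\vec{\xi}) = \trace(\mat{Q}) + \norm{\vec{z}_0}^2$ reduces \eqref{equ:bayesrisk2} to showing that the error has mean $\vec{z}_0 = \vec{0}$ and covariance $\mat{Q} = \postcov$.

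First I would make the error explicit. Substituting $\obs = \F\dpar + \vec{\eta}$ into the posterior-mean formula $\dpard = \postcov(\Fadj\ncov^{-1}\obs + \prcov^{-1}\dpar_0)$ and using $\Fadj\ncov^{-1}\F = \postcov^{-1} - \prcov^{-1}$ (which follows from $\postcov^{-1} = \HM + \prcov^{-1}$ with $\HM = \Fadj\ncov^{-1}\F$), a short manipulation yields
\[
   \dpard - \dpar = \postcov\prcov^{-1}(\dpar_0 - \dpar) + \postcov\Fadj\ncov^{-1}\vec{\eta}.
\]
Taking the iterated expectation $\avemu{\avey{\cdot}}$ and using $\avemu{\dpar} = \dpar_0$ together with $\avey{\vec{\eta}} = \vec{0}$ shows $\vec{z}_0 = \vec{0}$. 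For the covariance, I would use that $\dpar - \dpar_0$ and $\vec{\eta}$ are independent with covariances $\prcov$ and $\ncov$, so that by the pushforward formula $\mat{A}\mat{Q}\mat{A}^*$ the error covariance splits as the sum of the two contributions. Invoking the $\M$-self-adjointness of $\postcov$ and $\prcov$ and the identity $(\Fadj)^* = \F$, the prior term contributes $\postcov\prcov^{-1}\postcov$ and the noise term contributes $\postcov\Fadj\ncov^{-1}\F\postcov = \postcov(\postcov^{-1} - \prcov^{-1})\postcov$; these sum to $\postcov$. Feeding $\vec{z}_0 = \vec{0}$ and $\mat{Q} = \postcov$ into the quoted identity gives the claim.

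The main obstacle I anticipate is the bookkeeping of adjoints in the mass-weighted space $\R^n_\M$, in particular for the mixed operator $\postcov\Fadj\ncov^{-1}:\R^q \to \R^n_\M$, whose domain $\R^q$ carries the Euclidean inner product while its range carries the $\M$-inner product. Computing $(\postcov\Fadj\ncov^{-1})^*$ correctly --- here $(\ncov^{-1})^* = \ncov^{-1}$ on $\R^q$, $(\Fadj)^* = \F$, and $\postcov^* = \postcov$ on $\R^n_\M$ --- is what makes the noise contribution collapse to $\postcov\Fadj\ncov^{-1}\F\postcov$, and recognizing the exact cancellation of the $\postcov\prcov^{-1}\postcov$ terms between the prior and noise contributions is the crux of the argument. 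Everything else is the routine Gaussian computation licensed by the two facts quoted immediately before the proposition.
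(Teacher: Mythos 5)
Your proof is correct, and the algebra checks out at every step: the error formula $\dpard - \dpar = \postcov\prcov^{-1}(\dpar_0 - \dpar) + \postcov\Fadj\ncov^{-1}\vec{\eta}$, the mixed-space adjoint bookkeeping $(\postcov\Fadj\ncov^{-1})^{*} = \ncov^{-1}\F\postcov$ (using $(\Fadj)^{*}=\F$ and $\M$-self-adjointness of $\postcov$), and the operator-level cancellation $\postcov\prcov^{-1}\postcov + \postcov\HM\postcov = \postcov\bigl(\prcov^{-1}+\HM\bigr)\postcov = \postcov$. The paper organizes the same ingredients differently: it conditions on $\dpar$ and uses the bias--variance decomposition of the conditional MSE, computes the trace of each piece separately, namely $\trace(\postcov^2\HM)$ for the data-fluctuation term and $\trace(\postcov^2\prcov^{-1})$ for the prior-averaged squared bias, and then sums the traces. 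In fact the two decompositions agree term by term---your noise term is exactly the paper's $S\obs = \dpard - \avey{\dpard}$, and your prior term is minus the paper's $T\dpar = \dpar - \avey{\dpard}$---but where the paper sums traces, you sum covariance operators first and take a single trace. Your packaging buys a strictly stronger intermediate fact: under the joint law of parameter and data, the error $\dpard-\dpar$ is distributed exactly as $\GM{\vec{0}}{\postcov}$, an operator identity from which \eqref{equ:bayesrisk2} is immediate; it also replaces the iterated expectation $\avemu{\avey{\cdot}}$ by one Gaussian pushforward. What the paper's route buys is the explicit frequentist reading of the Bayes risk as estimator variability plus prior-averaged squared bias, which is precisely the decomposition that ties the A-optimal criterion to the MSE-based design criteria discussed in \secref{oed-formulation-discrete}.
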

\begin{proof}
An elementary calculation shows that,
\begin{multline}\label{equ:MSE_decomp}
  \avey{ \norm{\dpard - \dpar}^2_\M} = \\ 
     \avey{\norm{\dpard - \avey{\dpard}}^2_\M} + 
     \norm{\avey{\dpard} - \dpar}^2_\M,
\end{multline}
Let us consider the first term on the right-hand-side of~\eqref{equ:MSE_decomp}; denote 
$S\obs = \dpard - \avey{\dpard}$.
Recalling that $\obs | \dpar \sim \GM{\FF\dpar}{\ncov}$, we note that for any fixed $\dpar$,
we have $S\obs = \postcov \FF^* \ncov^{-1} (\obs - \FF \dpar)$.
Therefore, the law of the random variable
$S:(\R^q, \mu_{\obs | \dpar}) \to \R^n_\M$ is $\mu_S = \GM{\vec{0}}{\postcov\HM\postcov}$, where $\HM = \FF^* \ncov^{-1} \FF$.
Therefore,
\begin{equation}\label{equ:1st_term}
     \avey{\norm{\dpard - \avey{\dpard}}^2_\M} = 
     \int_{\R^n} \norm{\vec{\xi}}^2_\M \, d\mu_S(\vec{\xi}) =
\trace(\postcov^2 \HM),
\end{equation}
where we used  that $\trace(\postcov\HM\postcov) = \trace(\postcov^2 \HM)$.
Next, we consider the expectation over $\priorm$ of the second term in~\eqref{equ:MSE_decomp}. 
We let $T:(\R^n_\M, \priorm) \to \R^n_\M$ be defined by $T\dpar = \dpar - \avey{\dpard}$.
Then,
\begin{multline}
     T\dpar %
                    = \postcov( \postcov^{-1} - \HM) \dpar - \postcov\prcov^{-1} \dpar_0
                    = \postcov \prcov^{-1}(\dpar - \dpar_0).
\end{multline}
Hence, the law of $T$ is given by $\mu_T = \GM{\vec{0}}{\mat{C}}$ with
$\mat{C} = (\postcov \prcov^{-1})\prcov(\postcov \prcov^{-1})^* = \postcov \prcov^{-1} \postcov$.
Therefore, 
\begin{equation}\label{equ:2nd_term}
 \avemu{\norm{\avey{\dpard} - \dpar}^2_\M } = 
 \int_{\R^n} \norm{\vec{\xi}}^2_\M \, d\mu_T(\vec{\xi})%
 = \trace(\postcov^2 \prcov^{-1}).
\end{equation}
Hence, using~\eqref{equ:MSE_decomp},~\eqref{equ:1st_term}, and~\eqref{equ:2nd_term}, we have
\begin{multline*}
    \avey{ \norm{\dpard - \dpar}^2_\M} = \trace(\postcov^2 \HM) + \trace(\postcov^2 \prcov^{-1})\\
                                          = \trace\big(\postcov^2(\HM + \prcov^{-1}) \big)
                                          = \trace(\postcov),  
\end{multline*}
where, for the last equality we used $\HM + \prcov^{-1} = \postcov^{-1}$.
\end{proof}

\end{document}